\documentclass[onecolumn, 12pt]{IEEETran}
\linespread{2}


\usepackage{color}
\usepackage{cite}
\usepackage{amsmath, amsfonts, amssymb}
\usepackage{epsfig}
\usepackage{subfigure}
\usepackage{comment}
\usepackage{array}

\newcommand{\JHdagger}{*}
\newcommand{\JHnewpage}{}
\newcommand{\NN}{\nonumber}
\newcommand{\NNL}{\nonumber\\}

\newcommand{\AVR}[1]{\mathbb{E}\left[#1\right]}
\newcommand{\DoF}[1]{\lim_{P\to\infty}\frac{#1}{\log_2 P}}
\newcommand{\argmin}[1]{\underset{#1}{\arg\min}}
\newcommand{\argmax}[1]{\underset{#1}{\arg\max}}

\newcommand{\SNR}{\mathsf{\scriptscriptstyle SNR}}
\newcommand{\SINR}{\mathsf{\scriptscriptstyle SINR}}
\newcommand{\OIA}{\mathsf{\scriptscriptstyle OIA}}
\newcommand{\OIAI}{\mathsf{\scriptscriptstyle OIA1}}
\newcommand{\OIAII}{\mathsf{\scriptscriptstyle OIA2}}
\newcommand{\TDMII}{\mathsf{\scriptscriptstyle TDM2}}

\newcommand{\rate}{R}

\newtheorem{theorem}{Theorem}
\newtheorem{lemma}{Lemma}

\newtheorem{remark}{Remark}

\title{\huge On the Achievable DoF and User Scaling Law of Opportunistic
Interference Alignment in 3-Transmitter MIMO Interference Channels}

\author{\IEEEauthorblockN{Jung~Hoon~Lee},~\IEEEmembership{Student~Member,~IEEE},
\IEEEauthorblockN{Wan~Choi},~\IEEEmembership{Senior Member,~IEEE}
\thanks{J.~H.~Lee and W.~Choi are with Department of Electrical
    Engineering, Korea Advanced Institute of Science and Technology
    (KAIST), Daejeon 305-701, Korea (e-mail: tantheta@kaist.ac.kr,
    wchoi@ee.kaist.ac.kr).}
}

\begin{document}
\maketitle
\begin{abstract}
In this paper, we propose opportunistic interference alignment (OIA)
schemes for three-transmitter multiple-input multiple-output (MIMO)
interference channels (ICs). In the proposed OIA, each transmitter
has its own user group and selects a single user who has the most
aligned interference signals. The user dimensions provided by
multiple users are exploited to align interfering signals. Contrary
to conventional IA, perfect channel state information of all channel
links is not required at the transmitter, and each user just feeds
back one scalar value to indicate how well the interfering channels
are aligned. We prove that each transmitter can achieve the same
degrees of freedom (DoF) as the interference free case via user
selection in our system model that the number of receive antennas is
twice of the number of transmit antennas. Using the geometric
interpretation, we find the required user scaling to obtain an
arbitrary non-zero DoF. Two OIA schemes are proposed and compared
with various user selection schemes in terms of achievable rate/DoF
and complexity.

\end{abstract}


\begin{IEEEkeywords}
MIMO interference channel, interference alignment, opportunistic
interference alignment, postprocessing
\end{IEEEkeywords}

\clearpage

\section{Introduction}

Interference alignment (IA) has been touted as a key technology for
handling interference in future wireless communications
\cite{ETW2008, CJ2008, JS2008, PH2011, J2012, KV2009, GCJ2011,
GCJ2008, ST2008, GJ2008}.
Contrary to the conventional schemes which orthogonalize
interference signals, \cite{CJ2008} showed that IA can achieve a
total of $N/2$ degrees of freedom (DoF) in an $N$-transmitter
single-input single-output (SISO) interference channel (IC).
The achievable DoF for $N$-transmitter MIMO has been found in
\cite{GJ2008}.
Despite the promising aspects of IA, its implementation has many
challenges.
IA generally requires the perfect global channel knowledge of
desired and interfering channels at the transmitter  which involves
excessive signal overheads although blind IA schemes \cite{J2012}
without requiring channel knowledge have recently been proposed for
some specific environments.
Imperfect channel state information significantly degrades the gain
of IA \cite{KV2009}. The large computation complexity necessitated
is also regarded as a big challenge for practical implementation.
The sub-optimality of IA in the practical operating SNR region is
another problem \cite{GCJ2011}.

Recently, IA techniques to ameliorate these problems have been
investigated. Iterative IA algorithms were proposed to optimize
precoding matrix and to reduce the global channel knowledge burden
based on channel reciprocity \cite{GCJ2008, PH2011}. To reduce
computational complexity, Suh and Tse \cite{ST2008} proposed a
subspace interference alignment technique for an uplink cellular
network system. In \cite{PFLD2010}, IA was opportunistically
performed in MIMO cognitive radio networks, where secondary
transmitters transmit their signals on only spatial dimensions not
used by primary transmitters.
IA with imperfect channel state information (CSI) was shown to
achieve the same DoF as IA with perfect CSI if the feedback size per
user is properly scaled \cite{TB2009, KV2009,AH2010}. Also, IA with
imperfect CSI in correlated channel was studied in \cite{MAH2011}.

%
%
%

%

Although there have been significant efforts to overcome the
practical challenges, the inherent shortcomings of IA highly
motivate the development of more practical techniques. It is
desirable to attain the promised gain of IA with limited feedback
and reduced computational complexity. In this context, interference
management by user selection attracts attentions. The key idea
behind this opportunistic interference management is to select and
serve the user with the best channel or interference condition.
The selection criteria include maximum signal-to-noise ratio (SNR),
minimum interference-to-noise ratio (INR), maximum
signal-to-interference-pulse-noise ratio (SINR), and so on
\cite{SH2005, YG2006, CA2008, RRCY2010}.

\subsection{Opportunistic Interference Alignment (OIA)}

In this paper, we propose opportunistic interference alignment (OIA)
schemes by interpreting the opportunistic interference management
from a perspective of IA. In our proposed OIA, the user dimensions
provided by multiple users is exploited to align interfering
signals.  Different forms of OIA have been proposed in $K$-user SISO
IC using the random phase offset \cite{NELL2010} and in a cognitive
radio network \cite{SF2011}.

There are three transmitters and three user groups associated with
the transmitters.
Each user feeds one scalar value of an \emph{interference alignment
measure} back to the own transmitter, which indicates how well the
interfering channels are well aligned.
Based on the feedback information, each transmitter selects and
serves only a single user whose interfering channels are most
aligned so that a three-transmitter MIMO IC is opportunistically
constructed.
Thus, interference alignment is achieved by user selection rather
than transmit beamforming. Collaboration and Information sharing
among transmitters are not required.

The proposed OIA combines the concepts of opportunistic beamforming
and IA.
Contrary to opportunistic beamforming in a MIMO broadcast channel
\cite{HS2007, KGS2008}, each user only considers the interfering
channels rather than the desired channel; the interference from one
transmitter helps the other transmitter's user selection.

The basic concept of OIA was roughly introduced in 3-transmitter
$2\times 2$ MIMO IC \cite{LC2010} and $M\times 2M$ MIMO IC
\cite{LC2011}. However, the maximum achievable DoF by the OIA and
the relationship between the achievable DoF and the required user
scaling were not found. In this paper, we generalize our preliminary
studies on OIA \cite{LC2010, LC2011}. We consider the
three-transmitter $N_T \times N_R$ MIMO IC where $(N_T, N_R) = (M,
2M)$ and show that each transmitter can obtain DoF up to $M$ via the
proposed user selection. We also derive the required user scaling to
obtain given DoF.

For implementation, we propose two OIA schemes. In the first OIA
scheme (OIA1), each user directly minimizes the rate loss induced by
the interfering channels. Thus, each transmitter selects a user with
the minimum rate loss.
In the second OIA scheme (OIA2), aligned level of interfering
channels is geometrically interpreted; the transmitter selects a
user whose interfering channels span the closest subspaces.
The complexity of OIA2 can be reduced compared to OIA1 through a
geometric interpretation.

\subsection{Contributions}
We investigate the achievable DoF and user scaling law of the OIA
scheme in a three-transmitter MIMO IC where $K$ users are associated
with each transmitter, and the selected users together with their
transmitters construct a three-transmitter MIMO IC.
In our system model, each transmitter sends $M$ streams with $N_T
(=M)$ antennas, and each receiver has $N_R(=2M)$ antennas.
\begin{itemize}
\item We prove that each transmitter can achieve DoF $M$
by the OIA schemes without symbol extension and no cooperation. In
this case, we show that the transmitter and the selected user act
like interference-free $M\times M$ point-to-point MIMO system.
For $M \times 2M$ MIMO IC composed of three transmitters and three
users, $2M/3$ DoF per user is known to be achievable (with perfect
CSIT and symbol extension) \cite{GJ2008}.
Our result seems to be contradictory at first glance, but the
required spatial dimensions for $M$ data streams are secured through
the user dimensions provided by the $K$ users. This means that
multiuser DoF are translated into IA spatial dimensions.

\item We show that the number of users associated with each transmitter,
$K$, is enough to be scaled as $K \propto P^{mM}$ to achieve
$m\in[0,M]$ DoF per transmitter. When $K$ is fixed, the achievable
DoF by the OIA schemes is proved to be zero.

\item Finally, we look into the practical advantages of the proposed
OIA schemes; we show that the OIA scheme based on geometric concept
significantly reduces the computational complexity while achieving a
notable rate improvement compared to conventional opportunistic user
selection schemes.

\end{itemize}
%

%
%


\subsection{Organization}

The rest of this paper is organized as follows. Our system model is
described in Section II.
Preliminaries about the angles between two subspaces are provided in
Section III.
The proposed OIA schemes are described in Section IV, and the
achievable rate and DoF are analyzed in Section V.
Several conventional opportunistic user selection schemes are
summarized and compared with OIA schemes in Section VI.
The conclusions and comments on areas of future interest are given
in Section VII.

\subsection{Notations}

Throughout the paper, the notations $\mathbf{A}^\JHdagger$,
$\lambda_n(\mathbf{A})$, $\mathbf{v}_n(\mathbf{A})$,
$tr(\mathbf{A})$ and $\Vert \mathbf{A} \Vert_F$ denote the conjugate
transpose, $n$th largest eigenvalue, eigenvector corresponding to
$\lambda_n(\mathbf{A})$, trace, and Frobenius norm of matrix
$\mathbf{A}$, respectively.
Also, the notations $\mathbf{I}_n$, $diag(\cdot)$, $\mathbb{C}^{n}$
and $\mathbb{C}^{m\times n}$ indicate the $n\times n$ identity
matrix, a diagonal matrix whose diagonal elements are $(\cdot)$, the
$n$-dimensional complex space, and the set of $m\times n$ complex
matrices, respectively.
%

\JHnewpage\section{System Model}

Our system model is depicted in Fig. \ref{fig:system_model}. There
are three transmitters having $N_T (= M)$ antennas, and each
transmitter has its own user group consisting of $K$ users with $N_R
(=2M)$ antennas each.
Each transmitter selects a single user in its own user group and
sends $M$ data streams to the selected user. Consequently, the
transmitters and their selected users construct a three-transmitter
MIMO IC.
For user selection, each transmitter uses only partial information
fed back from each user, which is a single scalar value.
No collaborations and no information sharing are allowed among the
transmitters.

Our system operates with following four steps:
\begin{itemize}
\item Step 1: Each transmitter broadcasts a reference signal.
\item Step 2: Each user feeds one analog value back to the own transmitter.
\item Step 3: Each transmitter selects one user in its user group.
\item Step 4: Each transmitter serves the selected user with the random beams.
\end{itemize}
In Step 1, each transmitter broadcasts a reference signal. Thus,
each user obtains the information of the desired channel and two
interfering channels.
In Step 2, each user generates the feedback information from the
channel information, which is one scalar value.
Various feedback information can be constructed according to the
postprocessing and the user selection schemes.
In Step 3, each transmitter selects a single user in its user group.
Note that the user selection at each transmitter is independent of
one another because there are no information sharing and
collaboration among the transmitters.
In Step 4, the transmitters serve the selected users with the random
beams. Thus, the three-transmitter MIMO IC is opportunistically
constructed.

Since a user selection at each transmitter does not affect the
performances of the other transmitters, without loss of generality,
we only consider the user selection at the first transmitter.
Other transmitters can achieve the same average achievable rate with
the identical setting.

At the $k$th user in the first user group, the received signal
denoted by $\mathbf{y}_k$ is given by
\begin{align}
 \mathbf{y}_k
    &=\mathbf{H}_{k,1} \mathbf{x}_1
        + \sum_{i=2}^{3} \mathbf{H}_{k,i}\mathbf{x}_i
        + \mathbf{n}_k, \label{eqn:y_k}
\end{align}
where $\mathbf{H}_{k,i} \in \mathbb{C}^{N_R\times M}$ is the channel
matrix from transmitter $i$ to user $k$ in the first user group. The
term $\mathbf{x}_i \in \mathbb{C}^{M\times 1}$ is the transmit
signal of the $i$th transmitter. Since each transmitter does not
have channel state information, we assume equal power allocation
among $M$ data streams, i.e., $\mathbb{E}\{ \mathbf{x}_i
\mathbf{x}_i ^\JHdagger\} = (P/M) \mathbf{I}_M$. The random vector
$\mathbf{n}_k \in \mathbb{C} ^{N_R\times 1}$ is Gaussian noise with
zero mean and an identity covariance matrix, i.e., $\mathbf{n}_k
\sim \mathcal{CN} (0,\mathbf{I}_{N_R})$. When $N_T>M$, the system
model becomes statistically identical with $M\times 2M$ MIMO IC if
each transmitter uses an arbitrary precoding matrix $\mathbf{W} \in
\mathbb{C} ^{N_T\times M}$ such that $\mathbf{W}^\JHdagger
\mathbf{W}=\mathbf{I}_M$.

From \eqref{eqn:y_k}, the capacity at the $k$th user is given by
\cite{SPB2008}
\begin{align}
 C_k &=\log_2
    \bigg\vert
    \mathbf{I}_{N_R}
    +\frac{P}{M}
    \mathbf{H}_{k,1} \mathbf{H}^\JHdagger_{k,1}
    \bigg(
        \mathbf{I}_{N_R} + \frac{P}{M}\sum_{i=2}^3
        \mathbf{H}_{k,i} \mathbf{H}^\JHdagger_{k,i}
    \bigg)^{-1}
    \bigg\vert,
    \label{eqn:optimal_capacity_ik}
\end{align}
which requires joint decoding and non-linear receivers.
In our system model, we assume that each user adopts linear
postprocessing. Half of receive antenna dimensions (i.e., $M$) are
used for the desired $M$ data streams, and the remaining dimensions
are used for interference suppression. The received signals are
projected onto the $M$-dimensional subspace designated for the
desired signals at each receiver. The $k$th user uses the
postprocessing matrix $\mathbf{F}_k \in \mathbb{C}^{M\times N_R}$ to
project the received signals onto the row space of $\mathbf{F}_k$
which is $M$-dimensional subspace designated for the desired signals
in $\mathbb{C}^{N_R}$. Therefore, $\mathbf{F}_k$ consists of the
bases of the $M$-dimensional subspace designated for the desired
signals and satisfies $\mathbf{F}_k \mathbf{F}_k^\JHdagger =
\mathbf{I}_M$.  In this way, when each transmitter selects the user
who has perfectly aligned interfering signals, the selected user can
obtain DoF $M$ by the postprocessing.

At the $k$th user, the received signal after postprocessing becomes
\begin{align}
\mathbf{F}_k\mathbf{y}_k
    =\mathbf{F}_k\mathbf{H}_{k,1}\mathbf{x}_1
    +\sum_{i=2}^3 \mathbf{F}_k\mathbf{H}_{k,i}\mathbf{x}_i
     + \mathbf{F}_k\mathbf{n}_k\NN
\end{align}
and the achievable rate at the user $k$ denoted by $\rate_k$ is
given by
\begin{align}
 \rate_k
 &=\log_2
    \bigg\vert
    \mathbf{I}_M
    +\frac{P}{M}\mathbf{F}_k
    \mathbf{H}_{k,1} \mathbf{H}^\JHdagger_{k,1}
    \mathbf{F}_k^\JHdagger\bigg(
        \mathbf{I}_M +
        \frac{P}{M}\sum_{i=2}^3 \mathbf{F}_k\mathbf{H}_{k,i}
        \mathbf{H}^\JHdagger_{k,i} \mathbf{F}_k^\JHdagger
    \bigg)^{-1}
    \bigg\vert\NNL
 &= \log_2\frac{
    \left\vert
    \mathbf{I}_M
    +\frac{P}{M}\sum_{i=1}^3 \mathbf{F}_k\mathbf{H}_{k,i}
        \mathbf{H}^\JHdagger_{k,i}\mathbf{F}_k^\JHdagger
    \right\vert}
    {\left\vert
        \mathbf{I}_M +
        \frac{P}{M}\sum_{i=2}^3 \mathbf{F}_k\mathbf{H}_{k,i}
        \mathbf{H}^\JHdagger_{k,i}\mathbf{F}_k^\JHdagger
    \right\vert}. \label{eqn:capacity_ik}
\end{align}

Let $k^\star$ be the index of the selected user at the first
transmitter. Then, the achievable rate of the first transmitter
becomes $\rate_{k^\star}$.
When the transmitter supports one of $K$ users, the average
achievable rate at the first transmitter denoted by
$\mathcal{R}_{[K]}$ becomes
\begin{align}
 \mathcal{R}_{[K]} &\triangleq \mathbb{E}_{\mathbf{H}}[\rate_{k^\star}],
\end{align}
In this case, the achievable DoF of the first transmitter becomes
\begin{align}
 \mathcal{D} \triangleq \DoF{\mathcal{R}_{[K]}}.
\end{align}
Note that the average achievable rate and DoF of the system with all
transmitters become $3\mathcal{R}_{[K]}$ and $3\mathcal{D}$,
respectively.

Throughout the paper, we assume that all channel matrices (i.e.,
$\mathbf{H}_{k,i}$ for all $k$ and $i$) have independent and
identically distributed (i.i.d.) elements so that the interfering
subspaces formed by the interfering channels are isotropic and
independent of each other.

\JHnewpage\section{Preliminaries -- Angles between Two Subspaces}

In our system, each user suffers from two interfering channels each
of which constructs an $M$-dimensional subspace in $\mathbb{C}
^{N_R}$.
Because the distance between the two subspaces can be measured in
terms of angles between them, we shortly overview the angles between
two subspaces.
As a widely used geometric concept in wireless communications, the
Grassmann manifold $\mathcal{G}_{N_R,M}(\mathbb{C})$ is defined as
the set of all $M$-dimensional subspaces in an $N_R$-dimensional
space, $\mathbb{C}^{N_R}$ \cite{ZT2002, LHS2003, DLR2008, DLLR2009,
RJ2008}.
Consider two $M$-dimensional subspaces $\mathcal{A}, \mathcal{B}$ in
$N_R$-dimensional space, i.e., $\mathcal{A}, \mathcal{B} \in
\mathcal{G}_{N_R,M}(\mathbb{C})$.
The angles between the subspaces can be measured with the
\emph{principal angles} that is also called as the canonical angles.
Since both $\mathcal{A}$ and $\mathcal{B}$ are $M$-dimensional
subspaces, there are $M$ principal angles between them. Let
$\theta_1, \ldots, \theta_M \in[0,\pi/2]$ be the $M$ principal
angles such that $\theta_1< \ldots< \theta_M$, then we can find them
recursively by searching $N_R$-dimensional unit vectors
$\{\mathbf{a}_m, \mathbf{b}_m\}_{m=1}^M$ such that \cite[Chap.
12]{GL1989}
\begin{align}
 \cos\theta_m
 =\max_{\mathbf{a}\in\mathcal{A} \atop \mathbf{b}\in\mathcal{B}}~
    \vert\mathbf{a}^\JHdagger\mathbf{b}\vert
 =\vert\mathbf{a}_m^\JHdagger\mathbf{b}_m\vert \NN
\end{align}
subject to $\Vert\mathbf{a}\Vert=1$, $\Vert\mathbf{b}\Vert=1$,
$\mathbf{a}^\JHdagger\mathbf{a}_n=0$, $\mathbf{b} ^\JHdagger
\mathbf{b}_n=0$ ($1\le n \le m-1$).
The vectors $\{\mathbf{a}_m\}_{m=1}^M$ and
$\{\mathbf{b}_m\}_{m=1}^M$ become the \emph{principal vectors} of
$\mathcal{A}$ and $\mathcal{B}$, respectively.

From the principal angles, we can define various distances between
the subspaces.
Arguably, the \emph{chordal distance} is the most widely used one
among them.
The chordal distance between the subspaces $\mathcal{A}$ and
$\mathcal{B}$ denoted by $d_c(\mathcal{A}, \mathcal{B})$ is defined
as
\begin{align}
 d_c(\mathcal{A},\mathcal{B})
 \triangleq
    \sqrt{\sum_{m=1}^M \sin^2\theta_m}.
    \label{eqn:chordal_distance1}
\end{align}

Alternatively, we can use the \emph{generator matrices} to represent
the chordal distance; a generator matrix of a subspace consists of
orthonormal columns that span the subspace.
For example, $\mathbf{A}, \mathbf{B}\in\mathbb{C}^{N_R\times M}$ are
generator matrices of the subspace $\mathcal{A}, \mathcal{B} \in
\mathcal{G}_{N_R, M}(\mathbb{C})$ when $\mathbf{A}^\JHdagger
\mathbf{A}= \mathbf{B}^\JHdagger \mathbf{B} =\mathbf{I}_M$, and
their columns span the subspaces $\mathcal{A}$ and $\mathcal{B}$,
respectively.
Although the generator matrices $\mathbf{A}$ and $\mathbf{B}$ are
infinitely many, the chordal distance between two subspaces is
uniquely obtained with any generator matrix pairs such that
\begin{align}
    d_c(\mathcal{A},\mathcal{B})
    &=\frac{1}{2}\Vert \mathbf{A}\mathbf{A}^\JHdagger
        - \mathbf{B}\mathbf{B}^\JHdagger \Vert_F \NNL
    &=\sqrt{M-tr(\mathbf{A}^\JHdagger \mathbf{B} \mathbf{B}^\JHdagger
    \mathbf{A})}.\label{eqn:chordal_definition}
\end{align}

Also, we can obtain the principal angles and the principal vectors
from the generator matrices.
Let the singular value decomposition (SVD) of
$\mathbf{A}^\JHdagger\mathbf{B}$ be \cite[Chap. 12]{GL1989}
\begin{align}
    \mathbf{A}^\JHdagger\mathbf{B}=\mathbf{YDZ}^\JHdagger,
    \label{eqn:YZ}
\end{align}
where $\mathbf{Y},\mathbf{Z}\in\mathbb{C}^{M\times M}$ are unitary
matrices and $\mathbf{D}=diag(\mu_1,\mu_2,\ldots,\mu_M)$ where
$\mu_m$ is the $m$th largest singular value such that $\mu_1 \ge
\mu_2 \ldots \ge \mu_M \ge 0$. Then, the $m$th largest singular
value of $\mathbf{A}^\JHdagger\mathbf{B}$ and the $m$th principal
angle between $\mathcal{A}$ and $\mathcal{B}$ has the following
relationship:
\begin{align}
    \mu_m = \cos\theta_m. \NN
\end{align}
Also, the corresponding principal vectors $\mathbf{a}_m$ and
$\mathbf{b}_m$ can be obtained from $\mathbf{Y}$ and $\mathbf{Z}$
such that
\begin{align}
 \mathbf{a}_m=\mathbf{A}\mathbf{y}_m,\quad
 \mathbf{b}_m=\mathbf{B}\mathbf{z}_m, \NN
\end{align}
where $\mathbf{y}_m$ and $\mathbf{z}_m$ are the $m$th column vectors
of $\mathbf{Y}$ and $\mathbf{Z}$, respectively.

%

From the generator matrices and the principal angles, we obtain the
following lemma needed to analyze the proposed OIA scheme.

\begin{lemma}\label{lemma:p_angles}
When $\mathbf{A}, \mathbf{B} \in\mathbb{C}^{N_R\times M}$ are the
generator matrices of the subspaces $\mathcal{A}, \mathcal{B}
\in\mathcal{G}_{N_R, M}(\mathbb{C})$,
the eigenvalues of $\mathbf{A} \mathbf{A}^\JHdagger + \mathbf{B}
\mathbf{B} ^\JHdagger$ can be represented in descending order as
\begin{align}
    \underbrace{1+\cos^2\theta_1,\ldots,1+\cos^2\theta_M}_{M},
    \underbrace{1-\cos^2\theta_M,\ldots,1-\cos^2\theta_1}_{M}
    \label{eqn:AA_BB_eigen}
\end{align}
where $\theta_m$ is the $m$th principal angle between $\mathcal{A}$
and $\mathcal{B}$.
\end{lemma}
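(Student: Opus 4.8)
The plan is to use the hypotheses $\mathbf{A}^\JHdagger\mathbf{A} = \mathbf{B}^\JHdagger\mathbf{B} = \mathbf{I}_M$, which make $\mathbf{A}\mathbf{A}^\JHdagger$ and $\mathbf{B}\mathbf{B}^\JHdagger$ the orthogonal projections $P_\mathcal{A}$ and $P_\mathcal{B}$ onto the $M$-dimensional subspaces $\mathcal{A}$ and $\mathcal{B}$. Then $\mathbf{M}\triangleq\mathbf{A}\mathbf{A}^\JHdagger + \mathbf{B}\mathbf{B}^\JHdagger = P_\mathcal{A}+P_\mathcal{B}$ is Hermitian and positive semidefinite, and since the relative geometry of $\mathcal{A}$ and $\mathcal{B}$ is fully encoded by the principal angles, I expect its $2M$ eigenvalues to depend only on $\theta_1,\dots,\theta_M$. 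The tool for making this explicit is the SVD $\mathbf{C}\triangleq\mathbf{A}^\JHdagger\mathbf{B}=\mathbf{Y}\mathbf{D}\mathbf{Z}^\JHdagger$ of \eqref{eqn:YZ}, whose singular values are $\mu_m=\cos\theta_m$, so that $\mathbf{C}\mathbf{C}^\JHdagger$ and $\mathbf{C}^\JHdagger\mathbf{C}$ have eigenvalues $\cos^2\theta_m$.

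I would split $\mathbb{C}^{N_R}$ (with $N_R=2M$) into $\mathcal{A}$ and its orthogonal complement $\mathcal{A}^\perp$, choosing an orthonormal basis $[\,\mathbf{A}\ \ \mathbf{A}_\perp\,]$ with $\mathbf{A}^\JHdagger\mathbf{A}_\perp=\mathbf{0}$, and compute the two diagonal blocks of $\mathbf{M}$ in this basis. On $\mathcal{A}$ the block is $\mathbf{A}^\JHdagger\mathbf{M}\mathbf{A}=\mathbf{I}_M+\mathbf{C}\mathbf{C}^\JHdagger$, whose eigenvalues are the $M$ values $1+\cos^2\theta_m$. On $\mathcal{A}^\perp$ the desired-channel term drops out because $\mathbf{A}^\JHdagger\mathbf{A}_\perp=\mathbf{0}$, leaving $\mathbf{A}_\perp^\JHdagger\mathbf{M}\mathbf{A}_\perp=(\mathbf{A}_\perp^\JHdagger\mathbf{B})(\mathbf{A}_\perp^\JHdagger\mathbf{B})^\JHdagger$; using the resolution of identity $\mathbf{A}\mathbf{A}^\JHdagger+\mathbf{A}_\perp\mathbf{A}_\perp^\JHdagger=\mathbf{I}_{N_R}$ together with $\mathbf{B}^\JHdagger\mathbf{B}=\mathbf{I}_M$ gives $(\mathbf{A}_\perp^\JHdagger\mathbf{B})^\JHdagger(\mathbf{A}_\perp^\JHdagger\mathbf{B})=\mathbf{I}_M-\mathbf{C}^\JHdagger\mathbf{C}$, so this block has eigenvalues $1-\cos^2\theta_m$. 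Collecting the two groups and sorting in descending order reproduces exactly the list in \eqref{eqn:AA_BB_eigen}.

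The hard part will be justifying that the eigenvalues of these two $M\times M$ diagonal blocks are truly the eigenvalues of $\mathbf{M}$ rather than merely Rayleigh--Ritz values. A compression of $\mathbf{M}$ to a subspace returns genuine eigenvalues only when that subspace is $\mathbf{M}$-invariant, and the off-diagonal block $\mathbf{A}^\JHdagger\mathbf{M}\mathbf{A}_\perp=\mathbf{C}(\mathbf{A}_\perp^\JHdagger\mathbf{B})^\JHdagger$ is not obviously zero. The crux of the argument is therefore to show, using the principal-vector frame $\mathbf{a}_m=\mathbf{A}\mathbf{y}_m$, $\mathbf{b}_m=\mathbf{B}\mathbf{z}_m$ that diagonalizes $\mathbf{C}$, that $\mathbf{M}$ decouples into the two claimed blocks with vanishing cross-coupling, so that the block spectra assemble without mixing into \eqref{eqn:AA_BB_eigen}. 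I would finally treat the degenerate principal angles $\theta_m\in\{0,\pi/2\}$ separately, where a principal direction is either shared by or orthogonal to the two subspaces, to confirm the eigenvalue multiplicities in \eqref{eqn:AA_BB_eigen}.
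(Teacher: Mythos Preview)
Your block computation is correct, and you have put your finger on exactly the right obstruction: the off-diagonal block $\mathbf{A}^\JHdagger\mathbf{M}\mathbf{A}_\perp=\mathbf{C}(\mathbf{A}_\perp^\JHdagger\mathbf{B})^\JHdagger$ does \emph{not} vanish. In the principal-vector frame, take $\mathbf{A}_\perp=[\mathbf{e}_1,\dots,\mathbf{e}_M]$ with $\mathbf{b}_m=\cos\theta_m\,\mathbf{a}_m+\sin\theta_m\,\mathbf{e}_m$; then $\mathbf{A}_\perp^\JHdagger\mathbf{B}\mathbf{Z}=\mathrm{diag}(\sin\theta_m)$ and hence $\mathbf{C}(\mathbf{A}_\perp^\JHdagger\mathbf{B})^\JHdagger=\mathbf{Y}\,\mathrm{diag}(\cos\theta_m\sin\theta_m)$, which is nonzero whenever some $\theta_m\in(0,\pi/2)$. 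So $\mathcal{A}$ and $\mathcal{A}^\perp$ are not $\mathbf{M}$-invariant, and the numbers $1\pm\cos^2\theta_m$ that you extracted from the diagonal blocks are only Ritz values, not eigenvalues.

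What actually happens is that $\mathbf{M}$ decouples into the $M$ mutually orthogonal two-dimensional planes $\mathrm{span}(\mathbf{a}_m,\mathbf{e}_m)$; on each such plane the restriction of $\mathbf{a}_m\mathbf{a}_m^\JHdagger+\mathbf{b}_m\mathbf{b}_m^\JHdagger$ is
\[
\begin{pmatrix}1+\cos^2\theta_m & \cos\theta_m\sin\theta_m\\[2pt] \cos\theta_m\sin\theta_m & \sin^2\theta_m\end{pmatrix},
\]
with trace $2$ and determinant $\sin^2\theta_m$, hence eigenvalues $1\pm\cos\theta_m$ rather than $1\pm\cos^2\theta_m$. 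Already for $M=1$, $N_R=2$, with $\mathbf{A}=(1,0)^\JHdagger$ and $\mathbf{B}=(\cos\theta,\sin\theta)^\JHdagger$, the eigenvalues of $\mathbf{AA}^\JHdagger+\mathbf{BB}^\JHdagger$ are $1\pm\cos\theta$. The paper's own proof arrives at \eqref{eqn:AA_BB_eigen} only by silently dropping the cross term $\cos\theta_m\sin\theta_m(\mathbf{a}_m\mathbf{e}_m^\JHdagger+\mathbf{e}_m\mathbf{a}_m^\JHdagger)$ when it expands $\mathbf{b}_m\mathbf{b}_m^\JHdagger$; so the ``hard part'' you flagged cannot be carried out, because the lemma as stated is false and the paper's argument shares the same gap.
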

\vspace{.1in}
\begin{proof}
Using the unitary matrices $\mathbf{Y}$ and $\mathbf{Z}$ in
\eqref{eqn:YZ}, $\mathbf{A} \mathbf{A}^\JHdagger + \mathbf{B}
\mathbf{B} ^\JHdagger$ can be rewritten as
\begin{align}
    \mathbf{A} \mathbf{A}^\JHdagger +\mathbf{B} \mathbf{B}^\JHdagger
    &=\mathbf{AY} (\mathbf{AY})^\JHdagger +  \mathbf{BZ}(\mathbf{BZ})^\JHdagger\NNL
    &=\sum_{m=1}^{M}\left(\mathbf{a}_m\mathbf{a}_m^\JHdagger
    +\mathbf{b}_m\mathbf{b}_m^\JHdagger\right).
    \label{eqn:AA_BB}
\end{align}
Also, we decompose $\mathbf{b}_m$ as
\begin{align}
    \mathbf{b}_m= \cos\theta_m\mathbf{a}_m+\sin\theta_m\mathbf{e}_m,
    \label{eqn:bb}
\end{align}
where $\theta_m$ is the $m$th principal angle, and $\mathbf{e}_m$ is
an unit vector orthogonal with $\mathbf{a}_m$ such that
$\Vert\mathbf{e}_m\Vert=1$ and $\mathbf{a}_m \perp \mathbf{e}_m$.

From the property of principal vectors, $\mathbf{a}_i \perp
\mathbf{a}_j$ and $\mathbf{b}_i \perp \mathbf{b}_j$ for $i\ne j$.
Also, from the relationship between the principal angle and the
principal vector given in \eqref{eqn:YZ}, it is satisfied that
\begin{align}
(\mathbf{AY})^\JHdagger\mathbf{BZ}
=[\mathbf{a}_1,\ldots,\mathbf{a}_{M}]^\JHdagger
[\mathbf{b}_1,\ldots,\mathbf{b}_{M}] =\mathbf{D}\NN
\end{align}
which implies $\mathbf{a}_i \perp \mathbf{b}_j$ for $i\ne j$ because
$\mathbf{D}$ is a diagonal matrix defined in \eqref{eqn:YZ}.

Since $\mathbf{a}_i \perp \{\mathbf{a}_j, \mathbf{b}_j\}$ and
$\mathbf{b}_i \perp \{\mathbf{a}_j, \mathbf{b}_j\}$ for $i\ne j$, it
is satisfied that $span(\mathbf{a}_i, \mathbf{b}_i) \perp
span(\mathbf{a}_j, \mathbf{b}_j)$ for $i\ne j$, equivalently,
$span(\mathbf{a}_i, \mathbf{e}_i) \perp span(\mathbf{a}_j,
\mathbf{e}_j)$ for $i\ne j$.
Also, from the fact that $\mathbf{a}_i \perp \mathbf{e}_i$, we can
conclude that $\{\mathbf{a}_1, \ldots, \mathbf{a}_M, \mathbf{e}_1,
\ldots, \mathbf{e}_M \}$ becomes $2M$ orthonormal bases of
$\mathbb{C}^{2M}$.

From \eqref{eqn:bb}, we have
\begin{align}
    \mathbf{b}_m\mathbf{b}_m^\JHdagger
    &=(\cos\theta_m\mathbf{a}_m+\sin\theta_m\mathbf{e}_m)
        (\cos\theta_m\mathbf{a}_m+\sin\theta_m\mathbf{e}_m)^\JHdagger\NNL
    &=\cos^2\theta_m\cdot\mathbf{a}_m\mathbf{a}_m^\JHdagger
        +\sin^2\theta_m\cdot\mathbf{e}_m\mathbf{e}_m^\JHdagger,\NN
\end{align}
and \eqref{eqn:AA_BB} can be rewritten by
\begin{align}
    \mathbf{A} \mathbf{A}^\JHdagger +\mathbf{B} \mathbf{B}^\JHdagger
    &=\sum_{m=1}^{M}\left(\mathbf{a}_m\mathbf{a}_m^\JHdagger
        +\mathbf{b}_m\mathbf{b}_m^\JHdagger\right)\NNL
    &=\sum_{m=1}^{M}\left[(1+\cos^2\theta_m)\mathbf{a}_m\mathbf{a}_m^\JHdagger
        +(1-\cos^2\theta_m)\mathbf{e}_m\mathbf{e}_m^\JHdagger\right].\NN
\end{align}
Thus, $\mathbf{A} \mathbf{A}^\JHdagger +\mathbf{B}
\mathbf{B}^\JHdagger$ has the eigenvectors
$\{\mathbf{a}_m\}_{m=1}^M$ and $\{\mathbf{e}_m\}_{m=1}^M$, and
ordered eigenvalues given in \eqref{eqn:AA_BB_eigen}.
\end{proof}
\vspace{.1in}
\begin{lemma}\label{lemma:p_angles_sum}
When $\mathbf{A}, \mathbf{B} \in\mathbb{C}^{N_R\times M}$ are the
generator matrices of the subspaces $\mathcal{A}, \mathcal{B}
\in\mathcal{G}_{N_R, M}(\mathbb{C})$, sum of the $M$ smallest
eigenvalues of $\mathbf{A} \mathbf{A}^\JHdagger + \mathbf{B}
\mathbf{B} ^\JHdagger$ becomes the squared chordal distance between
$\mathcal{A}$ and $\mathcal{B}$.
\end{lemma}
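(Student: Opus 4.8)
The plan is to obtain the statement as an immediate corollary of Lemma~\ref{lemma:p_angles} together with the definition of the chordal distance in \eqref{eqn:chordal_distance1}. First I would invoke Lemma~\ref{lemma:p_angles}, which already lists the eigenvalues of $\mathbf{A}\mathbf{A}^\JHdagger + \mathbf{B}\mathbf{B}^\JHdagger$ in descending order as in \eqref{eqn:AA_BB_eigen}; these split into an upper block $\{1+\cos^2\theta_m\}_{m=1}^M$ and a lower block $\{1-\cos^2\theta_m\}_{m=1}^M$, where the $\theta_m$ are the principal angles between $\mathcal{A}$ and $\mathcal{B}$.

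The single step worth spelling out is the identification of the lower block as the $M$ smallest eigenvalues. Because each principal angle satisfies $\theta_m\in[0,\pi/2]$, we have $\cos^2\theta_m\ge 0$, so every entry of the upper block obeys $1+\cos^2\theta_m\ge 1$ while every entry of the lower block obeys $1-\cos^2\theta_m\le 1$. Hence the upper block dominates the lower block entrywise, and the $M$ smallest eigenvalues are exactly $\{1-\cos^2\theta_m\}_{m=1}^M$.

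Summing these and using $1-\cos^2\theta_m=\sin^2\theta_m$ gives
\begin{align}
\sum_{m=1}^M\left(1-\cos^2\theta_m\right)=\sum_{m=1}^M\sin^2\theta_m,\NN
\end{align}
which is precisely $d_c^2(\mathcal{A},\mathcal{B})$ by the definition in \eqref{eqn:chordal_distance1}, completing the argument. I do not anticipate any genuine obstacle here: the result is essentially a one-line consequence of Lemma~\ref{lemma:p_angles}, and the only claim needing justification is the ordering observation above, which is trivial once one notes that $\cos^2\theta_m$ is nonnegative. The substantive work is carried entirely by the already-established Lemma~\ref{lemma:p_angles}.
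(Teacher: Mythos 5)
Your proposal is correct and follows essentially the same route as the paper: apply Lemma~\ref{lemma:p_angles}, identify the $M$ smallest eigenvalues as $\{1-\cos^2\theta_m\}_{m=1}^M$, and sum them to recover $\sum_{m=1}^M\sin^2\theta_m = d_c^2(\mathcal{A},\mathcal{B})$ via \eqref{eqn:chordal_distance1}. Your explicit justification of the ordering (upper block $\ge 1$, lower block $\le 1$) is a small addition the paper leaves implicit in the word ``descending.''
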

\begin{proof}
From Lemma \ref{lemma:p_angles}, we can find that
\begin{align}
 \sum_{m=M+1}^{2M} \lambda_m \left(\mathbf{A}\mathbf{A}^\JHdagger
   + \mathbf{B} \mathbf{B} ^\JHdagger\right)
 &=\sum_{m=1}^{M} (1 - \cos^2\theta_m) \NNL
 &\stackrel{(a)}{=} d_c^2\left(\mathcal{A}, \mathcal{B}\right)\NN,
\end{align}
where the equality $(a)$ is from the definition of the chordal
distance given in \eqref{eqn:chordal_distance1}.
\end{proof}


\JHnewpage\section{Opportunistic Interference Alignment}
\subsection{What is the Opportunistic Interference Alignment?}

The basic concept of interference alignment is to minimize the
dimensions occupied by interfering signals. Although the dimensions
of \emph{each interfering signal} are irreducible, the dimensions
occupied by \emph{all interfering signals} can be minimized by
aligning them into the same subspace.
When the number of users is finite, it is obvious that two
interfering channels at each user are not aligned because two
interfering transmitters cannot access a common subspace at each
receiver. However, as the number of users increases, we can find the
user whose interfering channels are more overlapped with each other.
In the proposed OIA, we exploit the multiuser dimensions to align
the interfering signals.
By opportunistic user selection, two irreducible $M$-dimensional
interfering signals can be aligned in an $M$-dimensional subspace.

In this section, we propose two different OIA schemes. In the first
OIA scheme, the transmitter selects a user whose rate loss caused by
interference is minimum. In the second OIA scheme, the transmitter
selects a user who has the minimum distance between the interfering
signals. Now, we assume that the elements of all channel matrices
are i.i.d. circularly symmetric complex Gaussian random variables
with zero mean and unit variance.

We decompose the achievable rate at each user given in
\eqref{eqn:capacity_ik} into two terms $\rate_k^+$ and $\rate_k^-$
given by
\begin{align}
\rate_k^{+}
    &=\log_2
    \bigg\vert
    \mathbf{I}_M
        +\frac{P}{M}\sum_{i=1}^{3}
        \mathbf{F}_k\mathbf{H}_{k,i} \mathbf{H}_{k,i}^\JHdagger
    \mathbf{F}_k^\JHdagger
    \bigg\vert \label{eqn:capacity_gain}\\
\rate_k^{-}
    &=\log_2\bigg\vert
        \mathbf{I}_M +
        \frac{P}{M}\sum_{i=2}^{3} \mathbf{F}_k\mathbf{H}_{k,i}
        \mathbf{H}_{k,i}^\JHdagger\mathbf{F}_k^\JHdagger
    \bigg\vert, \label{eqn:capacity_loss}
\end{align}
respectively, so that $\rate_k = \rate_k^{+} - \rate_k^{-}$. We call
$\rate_k^-$ as \emph{rate loss term}.
In the same way, we can rewrite the average achievable rate at the
selected user among $K$ users as
\begin{align}
 \mathcal{R}_{[K]} = \mathcal{R}_{[K]}^+ - \mathcal{R}_{[K]}^-,
\end{align}
where $\mathcal{R}_{[K]}^+ = \AVR{\rate_{k^\star}^+}$ and
$\mathcal{R}_{[K]}^- = \AVR{\rate_{k^\star}^-}$, respectively.

Our proposed OIA schemes aim at minimizing the dimension occupied by
the interfering signals and hence maximizing the achievable DoF at
the transmitter.
Since it is straightforward that $\lim_{P\to\infty}
(\mathcal{R}_{[K]}^+/\log_2P)=M$, the achievable DoF of the first
transmitter using OIA can be expressed by
\begin{align}
 \DoF{\mathcal{R}_{[K]}}
 &= M - \DoF{\mathcal{R}_{[K]}^-}.
 \label{eqn:DoF_OIA}
\end{align}
Thus, we minimize the DoF loss caused by interference,
$\lim_{P\to\infty} (\mathcal{R}_{[K]}^-/\log_2P)$.
In next two subsections, we propose the OIA schemes to reduce the
DoF loss coming from the interferences.


\subsection{OIA via Rate Loss Minimization (OIA1)}

Firstly, we directly minimize the average rate loss term at the
selected user via the postprocessing matrix design and user
selection. In this case, the average rate loss term becomes
\begin{align}
 \mathbb{E}_{\mathbf{H}}\big[
   \min_{k, \mathbf{F}_k}~\rate_k^- \big]
 =\mathbb{E}_{\mathbf{H}}\bigg[\underset{k, \mathbf{F}_k}{\min} ~\log_2\bigg\vert \mathbf{I}_M +
    \frac{P}{M}\sum_{i=2}^{3} \mathbf{F}_k \mathbf{H}_{k,i}
    \mathbf{H}_{k,i}^\JHdagger\mathbf{F}_k^\JHdagger \bigg\vert \bigg].
    \label{eqn:avr_rate_loss}
\end{align}
For each channel realization, the user $k$ minimizes the rate loss
term by using the postprocessing matrix given by
\begin{align}
 \mathbf{F}_k ^\OIA
 &\triangleq  \argmin{\mathbf{F}_k} ~\log_2\bigg\vert \mathbf{I}_M +
    \frac{P}{M}\sum_{i=2}^{3} \mathbf{F}_k \mathbf{H}_{k,i}
    \mathbf{H}_{k,i}^\JHdagger\mathbf{F}_k^\JHdagger \bigg\vert \NNL
 &=\argmin{\mathbf{F}_k}~
    \big\vert\mathbf{F}_k(\mathbf{H}_{k,2} \mathbf{H}_{k,2} ^\JHdagger +
    \mathbf{H}_{k,3} \mathbf{H}_{k,3}^\JHdagger )\mathbf{F}_k^\JHdagger\big\vert
    \NNL
 &= \left[ \mathbf{v}_{M+1} (\mathbf{B}_k),\ldots,
   \mathbf{v}_{2M} (\mathbf{B}_k) \right] ^\JHdagger,
   \label{eqn:OIA_FB_F}
\end{align}
where $\mathbf{B}_k= \mathbf{H}_{k,2} \mathbf{H}_{k,2} ^\JHdagger +
\mathbf{H}_{k,3} \mathbf{H}_{k,3} ^\JHdagger$, and the corresponding
rate loss term becomes $\log_2\prod_{m=M+1}^{2M}\big(1 + \frac{P}{M}
\lambda_{m} \left(\mathbf{B}_k \right) \big)$.
Thus, the required feedback information for the $k$th user becomes
\begin{align}
   \prod_{m=M+1}^{2M}\left(1 +  \frac{P}{M} \lambda_{m}
   \left(\mathbf{B}_k \right) \right),
   \label{eqn:feedback_OIA1}
\end{align}
and the selected user at the transmitter denoted by
$k_{\OIAI}^\star$ becomes
\begin{align}
 k_{\OIAI}^\star &=  \argmin{k}
 \prod_{m=M+1}^{2M}\bigg(1 +  \frac{P}{M} \lambda_{m}
    \left(\mathbf{B}_k \right) \bigg).
\end{align}

\subsection{OIA via Chordal Distance Minimization (OIA2)}

As an alternative implementation, the transmitter can select a user
whose interfering channels are closest. The chordal distance is used
to measure the distance between the interfering channels at each
user.
%
%
Firstly, we find the upper bound of \eqref{eqn:avr_rate_loss} in the
following lemma.

\begin{lemma}\label{lemma:rate_loss_bound}
The minimized average rate loss term given in
\eqref{eqn:avr_rate_loss} is upper bounded by
\begin{align}
 \mathbb{E}_{\mathbf{H}}\big[
   \min_{k, \mathbf{F}_k}~\rate_k^- \big]
 \le\mathbb{E}_{\tilde{\mathbf{H}}}
   \bigg\{\min_k~
   M\log_2\bigg[ 1 +  \frac{P}{M}
         d_c^2(\tilde{\mathbf{H}}_{k,2}, \tilde{\mathbf{H}}_{k,3})
         \bigg]
   \bigg\}, \label{eqn:rate_loss_bound}
\end{align}
where $\tilde{\mathbf{H}}_{k,i} \in \mathbb{C}^{N_R\times M}$ is an
arbitrary generator matrix of the subspace spanned by
$\mathbf{H}_{k,i}$.
\end{lemma}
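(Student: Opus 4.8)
The plan is to collapse the inner minimization over the postprocessor $\mathbf{F}_k$ into a statement about the eigenvalues of the combined interference Gram matrix, and then to convert those eigenvalues into the chordal distance using Lemmas \ref{lemma:p_angles} and \ref{lemma:p_angles_sum}. First I would invoke the optimal postprocessor already identified in \eqref{eqn:OIA_FB_F}: for each channel realization the minimizing $\mathbf{F}_k$ projects onto the $M$ bottom eigenvectors of $\mathbf{B}_k=\mathbf{H}_{k,2}\mathbf{H}_{k,2}^\JHdagger+\mathbf{H}_{k,3}\mathbf{H}_{k,3}^\JHdagger$, so that $\min_{\mathbf{F}_k}\rate_k^-=\sum_{m=M+1}^{2M}\log_2\bigl(1+\tfrac{P}{M}\lambda_m(\mathbf{B}_k)\bigr)$. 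This rewrites the left side of \eqref{eqn:rate_loss_bound} purely in terms of the $M$ smallest eigenvalues of the aggregate interference, which is exactly the object Lemma \ref{lemma:p_angles_sum} is designed to evaluate.

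The passage from this log-sum to the target $M\log_2[\,\cdot\,]$ form then proceeds by two routine moves. The product $\prod_{m=M+1}^{2M}(1+\tfrac{P}{M}\lambda_m)$ is bounded by its arithmetic mean raised to the $M$th power (the geometric--arithmetic mean inequality), giving $\min_{\mathbf{F}_k}\rate_k^-\le M\log_2\bigl(1+\tfrac{P}{M^2}\sum_{m=M+1}^{2M}\lambda_m(\mathbf{B}_k)\bigr)$. I would then identify the sum of the $M$ smallest eigenvalues as the squared chordal distance via Lemma \ref{lemma:p_angles_sum} and the definition \eqref{eqn:chordal_distance1}, and finally loosen $\tfrac{1}{M^2}\le\tfrac{1}{M}$ to match the stated coefficient. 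Since the resulting per-user inequality holds for every realization, it survives the $\min_k$ and then the expectation, yielding \eqref{eqn:rate_loss_bound}.

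The step that requires genuine care -- the main obstacle -- is the link between $\sum_{m=M+1}^{2M}\lambda_m(\mathbf{B}_k)$ and $d_c^2$. Lemmas \ref{lemma:p_angles} and \ref{lemma:p_angles_sum} describe the eigenvalues $1\pm\cos^2\theta_m$ of the \emph{generator} matrix $\tilde{\mathbf{H}}_{k,2}\tilde{\mathbf{H}}_{k,2}^\JHdagger+\tilde{\mathbf{H}}_{k,3}\tilde{\mathbf{H}}_{k,3}^\JHdagger$, whereas $\mathbf{B}_k$ additionally carries the magnitudes of the interfering channels, so the identity ``sum of smallest eigenvalues $=d_c^2$'' holds for the normalized subspaces rather than for $\mathbf{B}_k$ itself. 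To decouple the bound from the interference power I would abandon the eigen-optimal $\mathbf{F}_k$ and instead insert a suboptimal (hence still upper-bounding) postprocessor whose rows lie in the span of the $\{\mathbf{e}_m\}_{m=1}^M$ of Lemma \ref{lemma:p_angles}, i.e. the directions orthogonal to one interfering subspace that capture only the $\sin^2\theta_m$ component of the other. Showing that this choice produces precisely the $d_c^2$ dependence -- and controlling the residual channel-magnitude factors, using the i.i.d./isotropic assumption so that the subspace geometry alone governs the bound -- is the delicate part of the argument and the place where I expect to spend most effort.
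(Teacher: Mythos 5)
There is a genuine gap, and you have in fact located it yourself without closing it. Your first two paragraphs rest on the claim that a \emph{per-realization} inequality of the form $\min_{\mathbf{F}_k}\rate_k^- \le M\log_2\bigl(1+\tfrac{P}{M}\,d_c^2(\tilde{\mathbf{H}}_{k,2},\tilde{\mathbf{H}}_{k,3})\bigr)$ can be established and then passed through $\min_k$ and the expectation. No such pointwise bound holds: the quantity $\sum_{m=M+1}^{2M}\lambda_m(\mathbf{B}_k)$ carries the singular values of the interfering channels, so for a fixed subspace geometry (fixed $d_c^2$) it can be made arbitrarily large by scaling the channel gains, while the right-hand side stays fixed. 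The same problem survives your proposed repair: projecting onto the $\{\mathbf{e}_m\}$ of Lemma \ref{lemma:p_angles} annihilates $\mathbf{H}_{k,2}$ but leaves $\mathbf{F}_k\mathbf{H}_{k,3}\mathbf{H}_{k,3}^{\JHdagger}\mathbf{F}_k^{\JHdagger}$, whose eigenvalues still depend on the magnitudes of $\mathbf{H}_{k,3}$, not only on the principal angles. So ``controlling the residual channel-magnitude factors'' is not a technicality to be deferred; it is the entire content of the lemma, and it cannot be done realization by realization.

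The paper's resolution is statistical rather than pointwise. Write the compact eigendecomposition $\mathbf{H}_{k,i}\mathbf{H}_{k,i}^{\JHdagger}=\mathbf{U}_{k,i}\mathbf{\Lambda}_{k,i}\mathbf{U}_{k,i}^{\JHdagger}$ and use the fact that for i.i.d.\ Gaussian channels $\mathbf{U}_{k,i}$ and $\mathbf{\Lambda}_{k,i}$ are independent. Then (i) condition on the $\mathbf{U}$'s and pull $\mathbb{E}_{\mathbf{\Lambda}}$ inside the minimization via $\mathbb{E}[\min]\le\min\mathbb{E}$ --- this step is essential and absent from your argument, since the selected user $k$ depends on the realization of $\mathbf{\Lambda}$; (ii) apply Jensen's inequality to the concave log-det with $\mathbb{E}[\mathbf{\Lambda}_{k,i}]=M\mathbf{I}_M$, which replaces $\tfrac{P}{M}\mathbf{H}_{k,i}\mathbf{H}_{k,i}^{\JHdagger}$ by $P\,\mathbf{U}_{k,i}\mathbf{U}_{k,i}^{\JHdagger}$ and thereby removes the channel magnitudes entirely; (iii) only now optimize $\mathbf{F}_k$ over the bottom eigenvectors of the pure subspace matrix $\mathbf{C}_k=\sum_{i=2}^3\mathbf{U}_{k,i}\mathbf{U}_{k,i}^{\JHdagger}$, apply the AM--GM/concavity step you describe, and invoke Lemma \ref{lemma:p_angles_sum} to identify $\sum_{m=M+1}^{2M}\lambda_m(\mathbf{C}_k)=d_c^2$. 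Your outline gets the tail of this chain right (the concavity step and the appeal to Lemma \ref{lemma:p_angles_sum}), but without the eigenvalue--eigenvector independence, the $\mathbb{E}[\min]\le\min\mathbb{E}$ interchange, and the Jensen step on $\mathbf{\Lambda}$, the bridge from $\mathbf{B}_k$ to the normalized generator matrices does not exist.
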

\begin{proof}
Since $\mathbf{H}_{k,i} \in \mathbb{C}^{N_R\times M}$, the matrix
$\mathbf{H}_{k,i} \mathbf{H}_{k,i}^\JHdagger$ has $M$ non-zero
eigenvalues. Thus, it can be decomposed by
\begin{align}
 \mathbf{H}_{k,i} \mathbf{H}_{k,i}^\JHdagger
   = \mathbf{U}_{k,i} \mathbf{\Lambda}_{k,i}
      \mathbf{U}_{k,i}^\JHdagger,
\end{align}
where $\mathbf{\Lambda}_{k,i} \in \mathbb{C}^{M \times M}$ is a
diagonal matrix whose diagonal elements are the non-zero eigenvalues
of $\mathbf{H}_{k,i} \mathbf{H}_{k,i}^\JHdagger$, and
$\mathbf{U}_{k,i} \in \mathbb{C}^{N_R \times M}$ consists of the
corresponding eigenvectors to the non-zero eigenvalues which becomes
a semi-orthogonal matrix such that $\mathbf{U}_{k,i}^\JHdagger
\mathbf{U}_{k,i} =\mathbf{I}_M$ but $\mathbf{U}_{k,i}
\mathbf{U}_{k,i}^\JHdagger \ne \mathbf{I}_{N_R}$
\footnote{Sometimes this decomposition is referred to compact SVD or
thin SVD.}.
Using this decomposition, we can bound \eqref{eqn:avr_rate_loss} as
follows:
\begin{align}
 \mathbb{E}_{\mathbf{H}}\big[
   \min_{k, \mathbf{F}_k}~\rate_k^- \big]
 &\stackrel{(a)}{=}\mathbb{E}_{\mathbf{U}}\Big\{~
   \mathbb{E}_{\mathbf{\Lambda}}\big[
   \min_{k, \mathbf{F}_k}~\rate_k^- \big]~\Big\}\NNL
 &\stackrel{(b)}{\le}\mathbb{E}_{\mathbf{U}}\Big\{~
   \min_{k, \mathbf{F}_k}~
   \mathbb{E}_{\mathbf{\Lambda}}\big[\rate_k^- \big] ~\Big\}\NNL
 &\stackrel{(c)}{\le}\mathbb{E}_{\mathbf{U}}
   \bigg\{\min_k~
   M\log_2\bigg[ 1 +  \frac{P}{M}
   d_c^2(\mathbf{U}_{k,2}, \mathbf{U}_{k,3})
   \bigg]\bigg\},\label{eqn:rate_loss_bound2}
\end{align}
where the equality $(a)$ holds from the fact that $\mathbf{U}_{k,i}$
and $\mathbf{\Lambda}_{k,i}$ are independent of each other
\cite{RJ2008}, and the inequality $(b)$ is because the average of
the minimum values is smaller than the minimum of the average
values.
The inequality $(c)$ holds because
\begin{align}
 \min_{k, \mathbf{F}_k}~
   \mathbb{E}_{\mathbf{\Lambda}}\big[\rate_k^- \big]
 &= \min_{k, \mathbf{F}_k}~
 \mathbb{E}_{\mathbf{\Lambda}}
    \log_2\bigg\vert\mathbf{I}_M +
    \frac{P}{M} \mathbf{F}_k \Big( \sum_{i=2}^{3}
    \mathbf{U}_{k,i}\mathbf{\Lambda}_{k,i}
    \mathbf{U}_{k,i}^\JHdagger \Big) \mathbf{F}_k^\JHdagger
    \bigg\vert        \NNL
 &\stackrel{(c_1)}{\le}  \min_{k, \mathbf{F}_k}~
    \log_2\bigg\vert\mathbf{I}_M +
    P \mathbf{F}_k \Big( \sum_{i=2}^{3}
    \mathbf{U}_{k,i} \mathbf{U}_{k,i}^\JHdagger \Big) \mathbf{F}_k^\JHdagger
    \bigg\vert \NNL
 &\stackrel{(c_2)}{=}\min_k~
 \log_2\prod_{m=M+1}^{2M}\bigg(1 +  P \lambda_{m}
    \left(\mathbf{C}_k \right) \bigg)\NNL
 &\stackrel{(c_3)}{\le} \min_k~M\log_2\bigg[ 1 +  \frac{P}{M}
    \sum_{m=M+1}^{2M}\lambda_{m}\left(\mathbf{C}_k \right) \bigg]
    \NNL
 &\stackrel{(c_4)}{=} \min_k~
   M\log_2\bigg[ 1 +  \frac{P}{M}
   d_c^2(\mathbf{U}_{k,2}, \mathbf{U}_{k,3}) \bigg],\NN
\end{align}
where $\mathbf{C}_k = \sum_{i=2}^{3} \mathbf{U}_{k,i}
\mathbf{U}_{k,i}^\JHdagger$.
The inequality $(c_1)$ is from the Jensen's inequality and
$\mathbb{E} [\mathbf{\Lambda}_{k,i}] = M\mathbf{I}_M$ \cite{RJ2008}.
The equality $(c_2)$ is obtained by applying $\mathbf{F}_k = \left[
\mathbf{v}_{M+1} (\mathbf{C}_k),\ldots, \mathbf{v}_{2M}
(\mathbf{C}_k) \right] ^\JHdagger$. Also, the inequality $(c_3)$ is
from the concavity of a logarithm function with the Jensen's
inequality. Finally, the equality $(c_4)$ is satisfied from Lemma
\ref{lemma:p_angles_sum}.
Although $\mathbf{U}_{k,i}$ is one of the generator matrices of the
subspace formed by $\mathbf{H}_{k,i}$, it can be replaced by any
arbitrary generator matrices denoted by $\tilde{\mathbf{H}}_{k,i}$
because the chordal distance is uniquely defined for any generator
matrices. Thus, the bound \eqref{eqn:rate_loss_bound2} can by
equivalently rewritten by \eqref{eqn:rate_loss_bound}.
\end{proof}

In OIA2, we minimize \eqref{eqn:rate_loss_bound} instead of
\eqref{eqn:avr_rate_loss}.
Thus, the feedback information at user $k$  becomes $d_c^2
(\tilde{\mathbf{H}}_{k,2}, \tilde{\mathbf{H}}_{k,3})$ given by
\begin{align}
 d_c^2(\tilde{\mathbf{H}}_{k,2}, \tilde{\mathbf{H}}_{k,3})
   &=\frac{1}{2}\Vert \tilde{\mathbf{H}}_{k,2}\tilde{\mathbf{H}}_{k,2}^\JHdagger
        - \tilde{\mathbf{H}}_{k,3}\tilde{\mathbf{H}}_{k,3}^\JHdagger \Vert_F \NNL
   &= M-tr(\tilde{\mathbf{H}}_{k,2}^\JHdagger \tilde{\mathbf{H}}_{k,3}
   \tilde{\mathbf{H}}_{k,3}^\JHdagger \tilde{\mathbf{H}}_{k,2}),
   \label{eqn:feedback_OIA2}
\end{align}
and the index of the selected user denoted by $k_{\OIAII}^\star$
becomes
\begin{align}
 k_\OIAII^\star
   = \argmin{k} ~ d_c^2(\tilde{\mathbf{H}}_{k,2},
   \tilde{\mathbf{H}}_{k,3}).
   \label{eqn:k1_OIA}
\end{align}

\begin{remark}
In OIA1, each user requires SVD to find the feedback information
\eqref{eqn:feedback_OIA1}, and concurrently the postprocessing
matrix is obtained.
In OIA2, however, each user only needs to find the generator
matrices of the interfering channels for the feedback information
given in \eqref{eqn:feedback_OIA2}.
Although the generator matrix can be obtained by various ways such
as SVD and QR decomposition, each user adopts the QR decomposition
to find the generator matrix since it is simpler than SVD.
Thus, we can greatly reduce the computational complexity of OIA2
compared with OIA1. We describe details on this in Section
\ref{subsection:complexities}.
\end{remark}

To quantify the rate loss at the selected user, we should find the
relationship between feedback value from the selected user and the
number of total users, i.e., the relationship between
$\mathbb{E}\Big[\underset{1\le k\le K}{\min} d_c^2(
\tilde{\mathbf{H}}_{k,2}, \tilde{\mathbf{H}}_{k,3})\Big]$ and $K$.
The following lemma helps us to obtain the average feedback value
from the selected user.

%
%

\begin{lemma}\label{lemma:distortion_bound}
The average feedback value from the selected user is equivalent to
the average of the minimum chordal distance when we quantize an
arbitrary subspace $\mathcal{A} \in \mathcal{G}_{N_R,M}(\mathbb{C})$
with one of the $K$ random subspaces $\mathcal{C}_{\mathrm{rnd}}
\subset \mathcal{G}_{N_R,M} (\mathbb{C})$ such that
\begin{align}
 \AVR{\min_k~d_c^2(\tilde{\mathbf{H}}_{k,2},
    \tilde{\mathbf{H}}_{k,3})}
 =  \mathbb{E}_{\mathcal{C}_{\mathrm{rnd}}}\left[
 \min_{\mathbf{W} \in \mathcal{C}_{\mathrm{rnd}}}
    d_c^2(\mathbf{A}, \mathbf{W})\right].
 \label{eqn:distortion}
\end{align}
\end{lemma}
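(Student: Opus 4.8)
The plan is to exploit the unitary invariance of the isotropic distribution on the Grassmann manifold, which lets me reinterpret each user's pair of interfering subspaces as one fixed subspace together with a single random codeword.

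First I would record two elementary facts. The chordal distance is invariant under a common unitary rotation: for any unitary $\mathbf{Q} \in \mathbb{C}^{N_R \times N_R}$, equation \eqref{eqn:chordal_definition} gives $d_c^2(\mathbf{Q}\tilde{\mathbf{H}}_{k,2}, \mathbf{Q}\tilde{\mathbf{H}}_{k,3}) = M - tr(\tilde{\mathbf{H}}_{k,2}^\JHdagger \mathbf{Q}^\JHdagger \mathbf{Q}\tilde{\mathbf{H}}_{k,3}\tilde{\mathbf{H}}_{k,3}^\JHdagger \mathbf{Q}^\JHdagger \mathbf{Q}\tilde{\mathbf{H}}_{k,2}) = d_c^2(\tilde{\mathbf{H}}_{k,2}, \tilde{\mathbf{H}}_{k,3})$, since $\mathbf{Q}^\JHdagger \mathbf{Q} = \mathbf{I}_{N_R}$. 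Second, under the i.i.d. Gaussian assumption the subspace spanned by each $\mathbf{H}_{k,i}$ is isotropically distributed on $\mathcal{G}_{N_R,M}(\mathbb{C})$, and this uniform measure is invariant under any unitary rotation chosen independently of the subspace itself.

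Next I would fix an arbitrary reference generator matrix $\mathbf{A}$ and, for each user $k$, condition on the subspace spanned by $\tilde{\mathbf{H}}_{k,2}$ and select a unitary $\mathbf{Q}_k$ mapping it onto the subspace spanned by $\mathbf{A}$. Rotation invariance of $d_c$ then yields $d_c^2(\tilde{\mathbf{H}}_{k,2}, \tilde{\mathbf{H}}_{k,3}) = d_c^2(\mathbf{A}, \mathbf{Q}_k \tilde{\mathbf{H}}_{k,3})$. Because the subspace spanned by $\mathbf{H}_{k,3}$ is isotropic and independent of the one spanned by $\mathbf{H}_{k,2}$, and hence of $\mathbf{Q}_k$, the rotated subspace spanned by $\mathbf{Q}_k \tilde{\mathbf{H}}_{k,3}$ remains isotropic and independent of $\mathbf{A}$. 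Thus each left-hand term is equal in distribution to $d_c^2(\mathbf{A}, \mathbf{W})$ with $\mathbf{W}$ an isotropic codeword, exactly the object appearing on the right-hand side; note this also shows the distribution does not depend on which reference $\mathbf{A}$ is used, so the differing per-user references on the left are all absorbed into the single fixed $\mathbf{A}$ on the right.

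Finally I would lift this marginal identity to the joint law and conclude. The pairs $(\mathbf{H}_{k,2}, \mathbf{H}_{k,3})$ are mutually independent across $k$, so $\{d_c^2(\tilde{\mathbf{H}}_{k,2}, \tilde{\mathbf{H}}_{k,3})\}_{k=1}^K$ are $K$ i.i.d. copies of $d_c^2(\mathbf{A}, \mathbf{W})$; the codewords of $\mathcal{C}_{\mathrm{rnd}}$ are i.i.d. isotropic by construction, so $\{d_c^2(\mathbf{A}, \mathbf{W}_k)\}_{k=1}^K$ are $K$ i.i.d. copies of the same distribution. Since $\min_k(\cdot)$ is a fixed symmetric function of its $K$ arguments, taking expectations of the minimum over each collection gives \eqref{eqn:distortion}. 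The main obstacle is the third step: one must argue carefully that the per-user rotation $\mathbf{Q}_k$ can be chosen as a measurable function of $\tilde{\mathbf{H}}_{k,2}$ and, most importantly, that rotating an isotropic subspace by a unitary depending only on an independent subspace leaves the isotropic law unchanged. This unitary invariance of the Grassmann-uniform measure is the structural fact that collapses the two apparently different averaging operations into one.
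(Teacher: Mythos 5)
Your proof takes essentially the same route as the paper: both fix a reference generator matrix $\mathbf{A}$, construct for each user a unitary rotation carrying $\tilde{\mathbf{H}}_{k,2}$ onto $\mathbf{A}$, invoke rotation-invariance of the chordal distance to write $d_c^2(\tilde{\mathbf{H}}_{k,2},\tilde{\mathbf{H}}_{k,3}) = d_c^2(\mathbf{A},\mathbf{Q}_k\tilde{\mathbf{H}}_{k,3})$, and identify the rotated interfering subspaces as a random codebook of $K$ isotropic codewords. The only difference is that you spell out the distributional step the paper leaves implicit --- that rotating an isotropic subspace by a unitary depending only on an independent subspace preserves the isotropic law, and that independence across $k$ makes the two minima equal in distribution --- which is a welcome tightening rather than a different argument.
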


\begin{proof}
Consider an arbitrary subspace $\mathcal{A}\in\mathcal{G}_{N_R,
M}(\mathbb{C})$ and its generator matrix
$\mathbf{A}\in\mathbb{C}^{N_R\times M}$.
Then, we define the rotation matrix $\mathbf{R}_k \in
\mathbb{C}^{N_R\times N_R}$ at the $k$th user, which rotates
$\tilde{\mathbf{H}}_{k,2}$ to $\mathbf{A}$ such that $\mathbf{R}_k
\tilde{\mathbf{H}} _{k,2} = \mathbf{A}$.
If we denote the generator matrix of the null space of $\mathcal{A}$
by $\mathbf{A} ^\perp \in \mathbb{C}^{N_R\times M}$, the matrix
$\mathbf{R}_k$ can be represented by
\begin{align}
 \mathbf{R}_k =
    \left[\mathbf{A}, \mathbf{A}^\perp\right]
    \left[\tilde{\mathbf{H}} _{k,2}, \tilde{\mathbf{H}} _{k,2}
    ^\perp\right]^\JHdagger,
\end{align}
which becomes a unitary matrix, i.e., $\mathbf{R}_k^\JHdagger
\mathbf{R}_k = \mathbf{R}_k \mathbf{R}_k^\JHdagger =
\mathbf{I}_{N_R}$.
Since the chordal distance is invariant with a rotation, the chordal
distance at the $k$th user satisfies
\begin{align}
 d_c^2(\tilde{\mathbf{H}}_{k,2}, \tilde{\mathbf{H}}_{k,3})
 = d_c^2(\mathbf{R}_k\tilde{\mathbf{H}}_{k,2},
    \mathbf{R}_k\tilde{\mathbf{H}}_{k,3})
 = d_c^2(\mathbf{A}, \mathbf{R}_k\tilde{\mathbf{H}}_{k,3}).
\end{align}
The chordal distance at the selected user becomes
\begin{align}
 \min_k~d_c^2(\tilde{\mathbf{H}}_{k,2}, \tilde{\mathbf{H}}_{k,3})
 &=\min_k~d_c^2(\mathbf{R}_k\tilde{\mathbf{H}}_{k,2},
    \mathbf{R}_k\tilde{\mathbf{H}}_{k,3})\NNL
 &=\min_{\mathbf{W} \in \mathcal{C}_{\mathrm{rnd}}}
    ~d_c^2(\mathbf{A}, \mathbf{W})
\end{align}
where $\mathcal{C}_{\mathrm{rnd}} \subset \mathcal{G}_{N_R, M} $ is
a set of $K$ random subspaces such that $\mathcal{C}_{\mathrm{rnd}}
= \{\mathbf{R}_k \tilde{\mathbf{H}}_{k,3}\} _{k=1}^K$.
Thus, the average chordal distance at the selected user can be given
by the average of the minimum chordal distance between an arbitrary
subspace and its quantized subspace by one of the $K$ random
subspaces as in \eqref{eqn:distortion}.
\end{proof}

It has been shown that the average quantization error when an
arbitrary source on the Grassmann manifold $\mathcal{G}_{N_R, M}
(\mathbb{C})$ is quantized with the random codebook
$\mathcal{C}_{\mathrm{rnd}} \subset \mathcal{G}_{N_R, M}
(\mathbb{C})$ of size $K$ is upper bounded by $D$ \cite{DLR2008},
i.e.,
\begin{align}
\mathbb{E} \Big[ \min_{ \mathbf{W} \in
    \mathcal{C}_{\mathrm{rnd}}}
    d_c^2 (\mathbf{H}, \mathbf{W}) \Big] \le D,
    \label{eqn:QE_bound}
\end{align}
where $D$ is given by
\begin{align}
 D=
    &\frac{\Gamma\left(\frac{1}{M^2}\right)}{M^2}
    (
        \eta K
    )^{-\frac{1}{M^2}}
    +M\exp\left[
        -\left(\eta K\right)^{1-a}
    \right]
    \label{eqn:D_bar}
\end{align}
with $\eta = \frac{1}{\Gamma(M^2+1)} \prod_{i=1} ^{M}
\frac{\Gamma(2M-i+1)} {\Gamma(M-i+1)}$,
and $a\in (0,1)$ is a real number chosen to satisfy $(\eta
K)^{\frac{-a}{M^2}}\le 1$.
Thus, from Lemma \ref{lemma:distortion_bound} and
\eqref{eqn:QE_bound}, we can conclude that the average feedback
value from the selected user is upper bounded as
\begin{align}
 \AVR{\min_k~d_c^2(\tilde{\mathbf{H}}_{k,2},
    \tilde{\mathbf{H}}_{k,3})} \le D. \label{eqn:D_le_Dbar}
\end{align}
Note that the second term in \eqref{eqn:D_bar} can be negligible
compared to the first term for large $K$\cite{DLR2008}, and the main
order term of \eqref{eqn:D_bar} is sufficiently
accurate\cite{DLR2008, DLLR2009, RJ2008}.

Once a user is selected at the transmitter, the selected user only
finds the postprocessing matrix to minimize the rate loss term,
which is given in \eqref{eqn:OIA_FB_F}, i.e., only the user
$k_2^\star$ finds the postprocessing matrix $\mathbf{F}
_{k_2^\star}^\OIA$.

\JHnewpage\section{Achievable Rate and Degrees of Freedom (DoF)}
This section analyzes the achievable rate of the proposed OIA
schemes and their DoF. Without loss of generality, the average
achievable rate and a DoF at the first transmitter are derived as in
the previous section.
We start from the following lemma.

\begin{lemma}\label{lemma:M-alphaM}
When the number of users (i.e., $K$) is fixed and invariant to $P$,
the achievable DoF by the proposed OIA schemes becomes zero such
that
\begin{align}
 \lim_{P\to\infty \atop \textrm{\normalfont Fixed } K }
 \frac{\mathcal{R}_{[K]}}{\log_2 P} = 0. \NN
\end{align}
\end{lemma}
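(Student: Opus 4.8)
The plan is to route everything through the decomposition $\mathcal{R}_{[K]} = \mathcal{R}_{[K]}^+ - \mathcal{R}_{[K]}^-$ and the identity \eqref{eqn:DoF_OIA}, so that it suffices to prove the DoF-loss term saturates, i.e. $\DoF{\mathcal{R}_{[K]}^-} = M$; then $\DoF{\mathcal{R}_{[K]}} = M - M = 0$. The upper bound $\DoF{\mathcal{R}_{[K]}^-}\le M$ is cheap: since the numerator matrix in \eqref{eqn:capacity_ik} is the denominator matrix plus a positive semidefinite term, every $\rate_k = \rate_k^+ - \rate_k^-$ is nonnegative realization-wise, so $\mathcal{R}_{[K]}^- = \AVR{\rate_{k^\star}^-} \le \AVR{\rate_{k^\star}^+} = \mathcal{R}_{[K]}^+$, and the already-noted fact $\DoF{\mathcal{R}_{[K]}^+} = M$ gives $\limsup_{P\to\infty}\mathcal{R}_{[K]}^-/\log_2 P \le M$. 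All the content is therefore in the matching lower bound.

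For the lower bound I would first isolate the structural fact that, with $K$ fixed, no user enjoys perfectly aligned interference. The key observation is that $\mathbf{B}_k = \mathbf{H}_{k,2}\mathbf{H}_{k,2}^\JHdagger + \mathbf{H}_{k,3}\mathbf{H}_{k,3}^\JHdagger = [\mathbf{H}_{k,2}, \mathbf{H}_{k,3}][\mathbf{H}_{k,2}, \mathbf{H}_{k,3}]^\JHdagger$, where $[\mathbf{H}_{k,2}, \mathbf{H}_{k,3}]$ is a square $N_R\times N_R = 2M\times 2M$ matrix with i.i.d. continuous (Gaussian) entries. Such a matrix is nonsingular with probability one, so $\mathbf{B}_k$ is positive definite and its smallest eigenvalue $\lambda_{2M}(\mathbf{B}_k) > 0$ almost surely. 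Because there are only finitely many users, $\Lambda \triangleq \min_{1\le k\le K}\lambda_{2M}(\mathbf{B}_k)$ is then an almost surely strictly positive random variable.

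Next I would note that the bound is insensitive to the selection rule, so both OIA1 and OIA2 are covered simultaneously: whichever user is selected, it applies the rate-loss-minimizing postprocessing of \eqref{eqn:OIA_FB_F}, hence its rate loss is exactly $\log_2\prod_{m=M+1}^{2M}(1 + \frac{P}{M}\lambda_m(\mathbf{B}_{k^\star}))$. Bounding each of the $M$ factors below by $(1+\frac{P}{M}\lambda_{2M}(\mathbf{B}_{k^\star}))$ and then by $(1+\frac{P}{M}\Lambda)$ yields the realization-wise inequality $\rate_{k^\star}^- \ge M\log_2(1 + \frac{P}{M}\Lambda)$, so $\mathcal{R}_{[K]}^- \ge \AVR{M\log_2(1 + \frac{P}{M}\Lambda)}$. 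Dividing by $\log_2 P$ and invoking Fatou's lemma (the integrand is nonnegative) pushes the $\liminf$ inside the expectation; since $\lim_{P\to\infty} M\log_2(1 + \frac{P}{M}\Lambda)/\log_2 P = M$ on the probability-one event $\{\Lambda > 0\}$, I get $\liminf_{P\to\infty}\mathcal{R}_{[K]}^-/\log_2 P \ge M$, which combined with the upper bound closes the argument.

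The step requiring the most care is the lower bound, and in particular the two places where finiteness of $K$ is doing the real work. Finiteness is what makes $\Lambda$ almost surely positive (an infimum over infinitely many users could drift to zero, which is exactly the regime the scaling-law theorems exploit), and it is what lets the selection be absorbed into the single deterministic lower envelope $M\log_2(1+\frac{P}{M}\Lambda)$. I expect the only genuinely delicate point to be justifying the limit–expectation interchange cleanly; here it is painless because the integrand is nonnegative and its pointwise limit equals $M$ on a full-measure event, so Fatou suffices for the lower bound and no uniform-integrability argument is needed, while the upper bound is handled entirely through the separate quantity $\mathcal{R}_{[K]}^+$.
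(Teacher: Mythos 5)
Your proof is correct and follows essentially the same route as the paper: both arguments reduce to showing $\DoF{\mathcal{R}_{[K]}^-}=M$ via the decomposition in \eqref{eqn:DoF_OIA}, using the fact that for any fixed finite $K$ the interference covariance $\mathbf{B}_{k^\star}$ is almost surely full rank, so the post-processed rate-loss term grows like $M\log_2 P$. Your version merely makes explicit the details the paper leaves implicit (the a.s.\ positive minimum eigenvalue over the finite user set, the selection-rule-independent lower envelope, and the Fatou argument for the limit--expectation interchange), which is a welcome tightening but not a different approach.
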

\begin{proof}
We can directly derive the achievable DoF from \eqref{eqn:DoF_OIA}.
At the user $k$, the matrix $\sum_{i=2}^3\mathbf{H}_{k, i}
\mathbf{H}_{k,i} ^\JHdagger$ has $2M$ non-zero eigenvalues with
probability one.
At the selected user $k^\star$ ($k^\star = k_1^\star$ or $k_2^\star$
using OIA1 or OIA2), the matrix $\sum_{i=2}^3 \mathbf{H}_{k^\star,
i} \mathbf{H} _{k^\star,i} ^\JHdagger$ also has $2M$ eigenvalues so
that $\sum_{i=2}^3 \mathbf{F}_{k^\star}^\OIA \mathbf{H}_{k^\star, i}
\mathbf{H} _{k^\star,i} ^\JHdagger\mathbf{F}_{k^\star}
^{\OIA\JHdagger}$ becomes a full rank matrix having $M$ non-zero
eigenvalues.
Thus, when $K$ is fixed (invariant with $P$), one can easily find
that $\underset{P\to\infty}{\lim} \frac{\mathcal{R}_{[K]} ^-}{\log_2
P} = M$.
Substituting this into \eqref{eqn:DoF_OIA}, we complete the proof.
%
%
%
%
\end{proof}
\vspace{.1in}

Fig. \ref{fig:varying_K} shows the average achievable rates of each
user with the proposed OIA2 scheme for $K=10$ and $K=50$,
respectively, when $(N_T, M, N_R)=(2, 2, 4)$. As stated in Lemma
\ref{lemma:M-alphaM}, the achievable DoF of each user becomes always
zero when the number of users is finite.

On the other hand, by increasing the number of users we can reduce
the rate loss term so that the positive DoF can be obtained at the
first transmitter.
In the next lemma, we find the upper bound of the rate loss term as
a function of the number of users.

\begin{lemma}\label{lemma:capacity_loss_bound}
When the first user group has $K$ users, the average rate loss term
at the selected user is bounded by
\begin{align}
 \mathcal{R}_{[K]}^{\textrm{loss}}\le M\log_2\left(1+\frac{P}{M}D
    \right),\label{eqn:R_loss_bound}
\end{align}
where $D$ is given in \eqref{eqn:D_bar}.
\end{lemma}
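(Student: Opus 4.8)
The plan is to compose Lemma~\ref{lemma:rate_loss_bound} with the random-codebook quantization bound \eqref{eqn:D_le_Dbar}, inserting two elementary monotonicity/concavity arguments in between. First I would identify $\mathcal{R}_{[K]}^{\textrm{loss}}$ with the minimized average rate loss $\AVR{\min_{k,\mathbf{F}_k}\rate_k^-}$ that Lemma~\ref{lemma:rate_loss_bound} already controls, so that its conclusion \eqref{eqn:rate_loss_bound} supplies the starting inequality $\mathcal{R}_{[K]}^{\textrm{loss}} \le \mathbb{E}_{\tilde{\mathbf{H}}}\{\min_k M\log_2[1+\frac{P}{M}d_c^2(\tilde{\mathbf{H}}_{k,2},\tilde{\mathbf{H}}_{k,3})]\}$. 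The remaining task is simply to push the inner minimum and the outer expectation through the logarithm.

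Next, because the map $x\mapsto M\log_2(1+\frac{P}{M}x)$ is monotonically increasing, the minimization over users commutes with the logarithm, so the right-hand side equals $\mathbb{E}_{\tilde{\mathbf{H}}}\{M\log_2[1+\frac{P}{M}\min_k d_c^2(\tilde{\mathbf{H}}_{k,2},\tilde{\mathbf{H}}_{k,3})]\}$. I would then apply Jensen's inequality: since the same map is concave in $x$, the expectation can be moved inside to give the bound $M\log_2[1+\frac{P}{M}\AVR{\min_k d_c^2(\tilde{\mathbf{H}}_{k,2},\tilde{\mathbf{H}}_{k,3})}]$. Finally, \eqref{eqn:D_le_Dbar}—which, via the quantization interpretation of Lemma~\ref{lemma:distortion_bound}, states that the expected minimum chordal distance at the selected user is at most $D$—together with one more use of monotonicity of the logarithm yields the claimed $\mathcal{R}_{[K]}^{\textrm{loss}}\le M\log_2(1+\frac{P}{M}D)$.

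Because each of these three steps is routine, I do not anticipate a substantive obstacle; the difficulty is entirely bookkeeping rather than analysis. The two points demanding care are (i) getting the direction of Jensen's inequality right—concavity is exactly what delivers the $\le$ needed for an upper bound, so I must not accidentally invoke it with a convex function—and (ii) confirming that the left-hand quantity $\mathcal{R}_{[K]}^{\textrm{loss}}$ is precisely the minimized rate loss governed by Lemma~\ref{lemma:rate_loss_bound}, i.e. that the selection rule and postprocessing achieving this loss match those for which the chordal-distance surrogate was derived. Once these identifications are in place, the three inequalities chain directly into \eqref{eqn:R_loss_bound}.
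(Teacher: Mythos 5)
Your proposal is correct and follows essentially the same route as the paper's own proof: start from the bound of Lemma~\ref{lemma:rate_loss_bound}, move the minimum inside the logarithm by monotonicity, apply Jensen's inequality using the concavity of $x\mapsto M\log_2(1+\tfrac{P}{M}x)$, and finish with the quantization bound \eqref{eqn:D_le_Dbar}. The two caveats you flag (direction of Jensen, identification of $\mathcal{R}_{[K]}^{\textrm{loss}}$ with the minimized rate loss) are exactly the points the paper's chain of inequalities relies on, so there is nothing further to add.
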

\begin{proof}
The inequality \eqref{eqn:rate_loss_bound} in Lemma
\ref{lemma:rate_loss_bound} can be further bounded by
\begin{align}
 \mathbb{E}_{\mathbf{H}}\big[
   \min_{k, \mathbf{F}_k}~R_k^- \big]
 &\le\mathbb{E}_{\tilde{\mathbf{H}}}
   \bigg\{\min_k~
   M\log_2\bigg[ 1 +  \frac{P}{M}
         d_c^2(\tilde{\mathbf{H}}_{k,2}, \tilde{\mathbf{H}}_{k,3})
         \bigg] \bigg\}\NNL
 &=\mathbb{E}_{\tilde{\mathbf{H}}}
   \bigg\{
   M\log_2\bigg[ 1 +  \frac{P}{M}\Big[
         \min_k d_c^2(\tilde{\mathbf{H}}_{k,2}, \tilde{\mathbf{H}}_{k,3})
         \Big]\bigg] \bigg\}\NNL
 &\stackrel{(a)}{\le}
   M\log_2\bigg[ 1 +  \frac{P}{M}
         \mathbb{E}_{\tilde{\mathbf{H}}}\left[ \min_k~
         d_c^2(\tilde{\mathbf{H}}_{k,2}, \tilde{\mathbf{H}}_{k,3})
         \right] \bigg] \NNL
 &\stackrel{(b)}{\le}
   M\log_2\left( 1 +  \frac{P}{M}
         D \right)\NN
\end{align}
where the inequality $(a)$ is from the Jensen's inequality, and the
inequality $(b)$ is from \eqref{eqn:D_le_Dbar}.
\end{proof}\vspace{.1in}

\begin{theorem}\label{theorem:OIA_infinite_K}
When the transmit power is fixed and the number of users goes to
infinity, i.e., $K\to\infty$, the achievable rate at the selected
user becomes the ergodic capacity of the $M\times M$ point-to-point
MIMO system without interference.
\end{theorem}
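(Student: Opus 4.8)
The plan is to prove directly that $\mathcal{R}_{[K]}\to C_{\mathrm{p2p}}$ as $K\to\infty$, where $C_{\mathrm{p2p}}\triangleq\mathbb{E}_{\mathbf{G}}[\log_2|\mathbf{I}_M+\frac{P}{M}\mathbf{G}\mathbf{G}^{\JHdagger}|]$ is the ergodic capacity of an $M\times M$ point-to-point MIMO channel $\mathbf{G}$ with i.i.d.\ $\mathcal{CN}(0,1)$ entries under equal power allocation. I would do this by sandwiching $\mathcal{R}_{[K]}$ between two expressions that both converge to $C_{\mathrm{p2p}}$.

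First I would record two facts. (i) From \eqref{eqn:D_bar}, both terms of $D$ vanish as $K\to\infty$: $(\eta K)^{-1/M^2}\to 0$, and since $1-a>0$ also $\exp[-(\eta K)^{1-a}]\to 0$; hence $D\to 0$ and $M\log_2(1+\frac{P}{M}D)\to 0$. (ii) The effective desired channel $\mathbf{G}\triangleq\mathbf{F}_{k^\star}\mathbf{H}_{k^\star,1}$ has i.i.d.\ $\mathcal{CN}(0,1)$ entries. The reason is that in both schemes the selected index $k^\star$ and the postprocessing matrix $\mathbf{F}_{k^\star}^{\OIA}$ are functions of the interfering channels only (through $\mathbf{B}_k=\mathbf{H}_{k,2}\mathbf{H}_{k,2}^{\JHdagger}+\mathbf{H}_{k,3}\mathbf{H}_{k,3}^{\JHdagger}$) and never use $\{\mathbf{H}_{k,1}\}_k$; since the channel matrices are mutually independent, conditioning on the interfering channels leaves $\mathbf{H}_{k^\star,1}$ i.i.d.\ $\mathcal{CN}(0,1)$ and independent of the semi-unitary matrix $\mathbf{F}_{k^\star}$ (with $\mathbf{F}_{k^\star}\mathbf{F}_{k^\star}^{\JHdagger}=\mathbf{I}_M$). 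A one-line second-moment check then shows $\mathbf{G}$ is i.i.d.\ $\mathcal{CN}(0,1)$, so $\mathbb{E}[\log_2|\mathbf{I}_M+\frac{P}{M}\mathbf{G}\mathbf{G}^{\JHdagger}|]=C_{\mathrm{p2p}}$.

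For the lower bound I would drop the interference energy inside $\rate_{k^\star}^+$: since $\sum_{i=2}^3\mathbf{F}_{k^\star}\mathbf{H}_{k^\star,i}\mathbf{H}_{k^\star,i}^{\JHdagger}\mathbf{F}_{k^\star}^{\JHdagger}\succeq 0$, monotonicity of $\log\det$ and fact (ii) give $\mathcal{R}_{[K]}^+\ge\mathbb{E}[\log_2|\mathbf{I}_M+\frac{P}{M}\mathbf{G}\mathbf{G}^{\JHdagger}|]=C_{\mathrm{p2p}}$. Combined with $\mathcal{R}_{[K]}^-\le M\log_2(1+\frac{P}{M}D)$ from Lemma \ref{lemma:capacity_loss_bound}, this yields $\mathcal{R}_{[K]}=\mathcal{R}_{[K]}^+-\mathcal{R}_{[K]}^-\ge C_{\mathrm{p2p}}-M\log_2(1+\frac{P}{M}D)$. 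For the upper bound I would write, per realization, $\rate_{k^\star}=\log_2|\mathbf{I}_M+(\mathbf{I}_M+\mathbf{N})^{-1}\mathbf{S}|$ with $\mathbf{S}=\frac{P}{M}\mathbf{G}\mathbf{G}^{\JHdagger}$ and $\mathbf{N}=\frac{P}{M}\sum_{i=2}^3\mathbf{F}_{k^\star}\mathbf{H}_{k^\star,i}\mathbf{H}_{k^\star,i}^{\JHdagger}\mathbf{F}_{k^\star}^{\JHdagger}\succeq 0$; the Sylvester identity $|\mathbf{I}+\mathbf{X}\mathbf{Y}|=|\mathbf{I}+\mathbf{Y}\mathbf{X}|$ together with $(\mathbf{I}_M+\mathbf{N})^{-1}\preceq\mathbf{I}_M$ gives $\rate_{k^\star}\le\log_2|\mathbf{I}_M+\frac{P}{M}\mathbf{G}^{\JHdagger}\mathbf{G}|$, whose expectation is again $C_{\mathrm{p2p}}$ by fact (ii). Hence $\mathcal{R}_{[K]}\le C_{\mathrm{p2p}}$.

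Putting the bounds together, $C_{\mathrm{p2p}}-M\log_2(1+\frac{P}{M}D)\le\mathcal{R}_{[K]}\le C_{\mathrm{p2p}}$, and letting $K\to\infty$ with fact (i) forces $\mathcal{R}_{[K]}\to C_{\mathrm{p2p}}$. I expect the main obstacle to be fact (ii): one must argue carefully that selecting the user with the most aligned \emph{interfering} subspaces induces no bias on the \emph{desired} channel, so that $\mathbf{F}_{k^\star}\mathbf{H}_{k^\star,1}$ remains an unbiased isotropic i.i.d.\ Gaussian $M\times M$ matrix. This independence between selection/projection and the desired channel is exactly what makes both the interference-free lower bound and the residual-interference upper bound collapse onto the \emph{same} point-to-point ergodic capacity; the determinant inequalities in the two bounds are then routine.
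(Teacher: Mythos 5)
Your proposal is correct and follows essentially the same route as the paper: both rest on $D\to 0$ from \eqref{eqn:D_bar} (via Lemma \ref{lemma:capacity_loss_bound}) to kill the rate loss term, and on the observation that $\mathbf{F}_{k^\star}\mathbf{H}_{k^\star,1}$ is an i.i.d.\ $\mathcal{CN}(0,1)$ matrix of size $M\times M$ because the user selection and postprocessing depend only on the interfering channels. The only difference is that you make the passage to the limit explicit with a two-sided sandwich ($\log_2\vert\mathbf{I}+\mathbf{S}\vert-\rate^-\le\rate\le\log_2\vert\mathbf{I}+\mathbf{S}\vert$), whereas the paper states the conclusion directly; your version is a slightly more careful rendering of the same argument.
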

\begin{proof}
When the transmit power is fixed, the rate loss term becomes zero as
the number of users goes to infinity. This can be obtained from
Lemma 6 using $\lim_{K\to\infty} D = 0$.
Thus, when the number of users increases and the transmit power is
fixed, the achievable rate using OIA2 becomes
\begin{align}
 \lim_{K\to\infty \atop \textrm{Fixed~} P} \mathcal{R}_{[K]}
 &= \mathbb{E}\log_2
    \bigg\vert
    \mathbf{I}_M + \frac{P}{M}
    \mathbf{F}_{k_\OIAII^\star}\mathbf{H}_{k_\OIAII^\star,1}
    \mathbf{H}_{k_\OIAII^\star,1}^\JHdagger \mathbf{F}_{k_\OIAII^\star}^\JHdagger
    \bigg\vert\NNL
 &= \mathbb{E}\log_2
    \bigg\vert
    \mathbf{I}_M + \frac{P}{M}
    \hat{\mathbf{H}} \hat{\mathbf{H}}^\JHdagger
    \bigg\vert,
    \label{eqn:MM_MIMO}
\end{align}
where $\hat{\mathbf{H}} \triangleq \mathbf{F}_{k_\OIAII^\star}
\mathbf{H}_{k_\OIAII^\star,1}$ becomes an $M\times M$ matrix whose
elements are i.i.d. Gaussian random variables with zero mean and
unit variance.
This is because $\mathbf{F}_{k_\OIAII^\star} \in \mathbb{C}^{M\times
N_R}$ is a semi-unitary matrix independently chosen on
$\mathbf{H}_{k_\OIAII^\star,1}$ such that
$\mathbf{F}_{k_\OIAII^\star} \mathbf{F}_{k_\OIAII^\star}^\JHdagger =
\mathbf{I}_M$.
The result in \eqref{eqn:MM_MIMO} implicates that when the number of
users goes to infinity, each transmitter achieves the same ergodic
rate as the ergodic capacity of an interference-free $M\times M$
point-to-point MIMO system.
Proof for the OIA1 case is trivial.
\end{proof}
In Lemma \ref{lemma:M-alphaM}, we showed that the achievable DoF by
OIA becomes zero when the number of users is fixed. Theorem 1
implicates that the achievable rate by the proposed OIA schemes
becomes the same as that of an $M\times M$ point-to-point MIMO
system when the number of users increases under a fixed power. Based
on these results, we can conjecture that the achievable DoF by the
proposed OIA schemes will be ranged in $[0,M]$ if the number of
users is sufficiently large, i.e.,
\begin{align}
 \lim_{P\to\infty}
 \left[
    \lim_{K\to\infty}
    \frac{\mathcal{R}_{[K]}}{\log_2 P}
 \right] \in [0, M]. \NN
\end{align}
The increasing speeds of $K$ and $P$ will determine the value of
achievable DoF and Theorem 2 establishes the relationship between
achievable DoF and the required number of users.
\begin{theorem}\label{theorem:required_K}
At each transmitter, DoF $m\in[0,M]$ is obtained when the number of
users is scaled as
\begin{align}
    K\propto P^{mM}.\nonumber
\end{align}
\end{theorem}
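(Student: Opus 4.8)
The plan is to work entirely through the DoF decomposition $\mathcal{D} = M - \DoF{\mathcal{R}_{[K]}^-}$ in \eqref{eqn:DoF_OIA}, so that establishing $\mathcal{D} = m$ reduces to showing $\DoF{\mathcal{R}_{[K]}^-} = M - m$ under the scaling $K \propto P^{mM}$. The whole argument then hinges on how fast the rate-loss term vanishes with $K$, which is exactly what Lemma \ref{lemma:capacity_loss_bound} and the distortion bound $D$ of \eqref{eqn:D_bar} control. For the achievability half, $\mathcal{D} \ge m$, I would substitute the scaling law directly into the upper bound $\mathcal{R}_{[K]}^- \le M\log_2(1 + \frac{P}{M}D)$, which holds for both OIA schemes.

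First I would isolate the dominant order of $D$. For large $K$ the exponential term in \eqref{eqn:D_bar} is negligible, so $D = \Theta(K^{-1/M^2})$; plugging in $K \propto P^{mM}$ gives $K^{-1/M^2} \propto P^{-m/M}$ and hence $D \propto P^{-m/M}$. Therefore $\frac{P}{M}D \propto P^{1 - m/M}$, and for $m \in [0, M)$ this diverges, so $M\log_2(1 + \frac{P}{M}D) = (M-m)\log_2 P + O(1)$. Dividing by $\log_2 P$ and letting $P \to \infty$ yields $\DoF{\mathcal{R}_{[K]}^-} \le M - m$, i.e. $\mathcal{D} \ge m$. The boundary case $m = M$ makes $\frac{P}{M}D$ tend to a constant, so the rate loss is $O(1)$, $\DoF{\mathcal{R}_{[K]}^-} = 0$, and again $\mathcal{D} \ge M = m$. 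This half uses only monotonicity of the logarithm, the stated fact $\DoF{\mathcal{R}_{[K]}^+} = M$, and non-negativity of the rate loss.

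The converse, $\mathcal{D} \le m$, is where the real work lies, and I expect it to be the main obstacle. The upper bound of Lemma \ref{lemma:capacity_loss_bound} only produces a lower bound on $\mathcal{D}$; to pin the DoF to exactly $m$ I would need a matching lower bound on the rate loss, hence a converse companion to \eqref{eqn:QE_bound}, namely $\AVR{\min_k d_c^2(\tilde{\mathbf{H}}_{k,2}, \tilde{\mathbf{H}}_{k,3})} \gtrsim K^{-1/M^2}$, which follows from a volume or sphere-packing argument on $\mathcal{G}_{N_R,M}(\mathbb{C})$. The delicate point is converting a lower bound on the expected minimum chordal distance into a lower bound on $\mathcal{R}_{[K]}^-$: since the rate loss is concave in the eigenvalues of $\mathbf{B}_{k^\star}$, Jensen's inequality runs the wrong way. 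I would instead bound the rate loss below by a single-eigenvalue term, using $\lambda_{M+1}(\mathbf{B}_{k^\star}) \ge \frac{1}{M}\sum_{m=M+1}^{2M}\lambda_m = \frac{1}{M}d_c^2$ from Lemma \ref{lemma:p_angles_sum}, to write $\mathcal{R}_{[K]}^- \ge \AVR{\log_2(1 + \frac{P}{M^2}\min_k d_c^2)}$, and then combine the distortion lower bound with a concentration estimate for $\min_k d_c^2$ so that the expected logarithm inherits the $P^{1-m/M}$ scaling. Putting the two halves together gives $\DoF{\mathcal{R}_{[K]}^-} = M - m$ and therefore $\mathcal{D} = m$.
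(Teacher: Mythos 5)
Your first two paragraphs reproduce the paper's proof almost verbatim: the paper likewise reduces the claim to $\DoF{\mathcal{R}_{[K]}^-}=M-m$ via \eqref{eqn:DoF_OIA}, substitutes the scaling into the bound of Lemma~\ref{lemma:capacity_loss_bound} so that $D\propto K^{-1/M^2}\propto P^{-m/M}$ and $M\log_2\bigl(1+\tfrac{P}{M}D\bigr)$ has pre-log factor $M-m$, and then stops. In other words, the paper proves only the sufficiency direction $\mathcal{D}\ge m$ (its own wording is ``we obtain the \emph{sufficient} scaling''), so the converse you identify as ``where the real work lies'' is simply not attempted there; your observation that the upper bound on the rate loss alone cannot pin the DoF to exactly $m$ is correct and is a limitation of the paper's argument, not of yours. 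Since the theorem only asserts that DoF $m$ \emph{is obtained} under the stated scaling, your achievability half already matches everything the paper establishes.

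Two cautions on the extra converse sketch, should you pursue it. First, the concentration estimate needed to pass from a lower bound on $\AVR{\min_k d_c^2}$ to a lower bound on $\AVR{\log_2(1+\tfrac{P}{M^2}\min_k d_c^2)}$ is asserted rather than proved, and it is the crux of that direction. Second, the identity $\sum_{m=M+1}^{2M}\lambda_m=d_c^2$ from Lemma~\ref{lemma:p_angles_sum} applies to the sum of generator-matrix projectors $\mathbf{C}_k=\sum_{i=2}^{3}\mathbf{U}_{k,i}\mathbf{U}_{k,i}^\JHdagger$, not to $\mathbf{B}_{k^\star}=\sum_{i=2}^{3}\mathbf{H}_{k^\star,i}\mathbf{H}_{k^\star,i}^\JHdagger$, whose small eigenvalues also carry the random channel singular values in $\mathbf{\Lambda}_{k,i}$; the lower bound $\lambda_{M+1}(\mathbf{B}_{k^\star})\ge\tfrac{1}{M}d_c^2$ therefore needs an additional step controlling $\lambda_M(\mathbf{\Lambda}_{k,i})$ from below. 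Neither issue affects the part of your argument that corresponds to the paper's proof.
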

\begin{proof}
Because the achievable DoF using OIA is given by $M -
\DoF{\mathcal{R}_{[K]}^-}$, the equivalent condition for DoF $m$ is
\begin{align}
 \DoF{\mathcal{R}_{[K]}^-} = M-m. \label{eqn:DoF_M-m}
\end{align}
Using the upper bound given in \eqref{eqn:R_loss_bound}, we obtain
the sufficient scaling for \eqref{eqn:DoF_M-m} such that
\begin{align}
 \DoF{M\log_2\left(1+\frac{P}{M} D \right)}=M-m.
\end{align}
Substituting \eqref{eqn:D_bar} into above equation, we obtain the
required user scaling $K \propto P^{mM}$ to obtain DoF of $m$ at
each transmitter.
\end{proof}

\begin{remark}
In Theorem \ref{theorem:OIA_infinite_K}, we have shown that each
transmitter and the selected user communicate like an interference
free $M\times M$ MIMO system as the number of users goes to infinity
for fixed SNR.
Theorem \ref{theorem:required_K} implicates that the transmitter can
asymptotically achieve the same rate as the capacity of an
interference free $M\times M$ MIMO system with user scaled as $K
\propto P^{M^2}$ in high SNR region.
\end{remark}

In Fig. \ref{fig:user_scaled_alpha10}, the achievable rate per
transmitter with OIA2 scheme is plotted when $(N_T,M,N_R)=(1,1,2)$.
With the user scaling $K \propto P^{M^2}$, the achievable DoF is
maintained as $M$ as predicted in Theorem \ref{theorem:required_K}.


\JHnewpage\section{Comparison with Conventional Opportunistic User
Selection}

In this section, we compare the proposed OIA schemes with
conventional user selection schemes in terms of computational
complexities and achievable rate.


\subsection{Maximum SNR User Selection (MAX-SNR)}
Firstly, we consider the maximum SNR user selection scheme
(MAX-SNR).
In this scheme, each user maximizes the achievable rate ignoring the
effects of the interfering channels.
At the $k$th user, the postprocessing matrix is designed by
\begin{align}
 \mathbf{F}_k^\SNR
 &\triangleq \argmax{\mathbf{F}_k}~
    \log_2  \left\vert
    \mathbf{I}_M + \frac{P}{M}\mathbf{F}_k
    \mathbf{H}_{k,1} \mathbf{H}_{k,1}^\JHdagger
    \mathbf{F}_k^\JHdagger \right\vert,
    \label{eqn:postprocessing_SNR}
\end{align}
and thus $\mathbf{F}_{k}^\SNR =\left[\mathbf{v}_{1}
(\mathbf{A}_k),\ldots,
\mathbf{v}_{M}(\mathbf{A}_k)\right]^\JHdagger$
where $\mathbf{A}_k = \mathbf{H}_{k,1} \mathbf{H}_{k,1}^\JHdagger$.
The corresponding achievable rate at the $k$th user becomes $\log_2
\prod_{m=1}^{M}\left( 1+\frac{P}{M} \lambda_m( \mathbf{H}_{k,1}
\mathbf{H}_{k,1}^\JHdagger )\right)$.
Thus, the feedback information from the $k$th user becomes
$\prod_{m=1}^{M}\left( 1+\frac{P}{M} \lambda_m( \mathbf{H}_{k,1}
\mathbf{H}_{k,1}^\JHdagger )\right)$, and the index of the selected
user denoted by $k_\SNR^\star$ becomes
\begin{align}
 k_\SNR^\star
    = \argmax{k} ~ \prod_{m=1}^{M} \left(
    1+\frac{P}{M} \lambda_m (\mathbf{H}_{k,1}
    \mathbf{H}_{k,1}^\JHdagger)\right).
\end{align}
\subsection{Time Division Multiplexing}
In this subsection, we consider two time division multiplexing
schemes.
In the first time division multiplexing scheme (TDM1), only one of
the three transmitters serves its selected user at any time
instance. Therefore, the selected user does not receive any
interference from other transmitters.
Each user finds the postprocessing matrix to maximize the achievable
rate, so the postprocessing matrix at the transmitter is the same as
that of the MAX-SNR scheme given in \eqref{eqn:postprocessing_SNR}.
Also, the feedback information from each user and the user selection
criterion are exactly the same as those of the MAX-SNR scheme.
Because only one selected user is exclusively served by the TDM
approach, the achievable DoF per transmitter becomes $M/3$.

We also consider another time division multiplexing scheme (TDM2)
where only two of three transmitters serve their selected users.
Since each user has $2M$ antennas, three transmitters can obtain
$2M$ DoF for each channel realization, i.e., each transmitter can
achieve $2M/3$ DoF. In TDM2, each transmitter selects a user who has
the minimum rate loss term. When the first and the second
transmitters simultaneously transmit, the rate loss term of the
$k$th user of the first transmitter is minimized as
\begin{align}
 \min_{\mathbf{F}_k}\log_2\left\vert \mathbf{I}_M + \frac{P}{M} \mathbf{F}_k \mathbf{H}_{k,2}
    \mathbf{H} _{k,2} ^\JHdagger \mathbf{F}_k ^\JHdagger \right\vert
 =\prod_{m=M+1}^{2M}\left(1 +  \frac{P}{M} \lambda_{m} \big(
    \mathbf{H}_{k,2} \mathbf{H} _{k,2} ^\JHdagger \big) \right).\NN
\end{align}
Therefore, the required feedback information at the $k$th user
becomes the right-hand-side of the equality, and the selected user
at the first transmitter denoted by $k_{\TDMII}^\star$ becomes
\begin{align}
 k_{\TDMII}^\star &=  \argmin{k}
 \prod_{m=M+1}^{2M}\bigg(1 +  \frac{P}{M} \lambda_{m}
    \left(\mathbf{H}_{k,2} \mathbf{H} _{k,2} ^\JHdagger \right)
    \bigg).\NN
\end{align}

\subsection{Complexity Analysis}\label{subsection:complexities}

In this subsection, the computational complexity of each scheme is
represented by the number of floating point operations (flops)
\cite[Chap. 1]{GL1989}. An addition, multiplication, or division of
real numbers is counted as one flop, so a complex addition and
multiplication are counted as two flops and six flops, respectively.
For an $m\times n$ complex matrix $\mathbf{G}\in\mathbb{C}^{m\times
n}$ $(m\ge n)$, the flops required for several matrix operations are
summarized in Table \ref{tab:complexity} where the operation
$\otimes$ is defined as $\mathbf{G}\otimes\mathbf{G} =
\mathbf{G}\mathbf{G}^\JHdagger$.

In the MAX-SNR scheme, each user requires one $\otimes$ operation, a
single SVD, $2M$ real additions and $M$ real multiplications to find
feedback information.
Correspondingly, the total computational complexity becomes
$K\times(8N_RM^2-2N_RM)+K\times (24N_R^3 + 48N_R^3 + 54N_R^3) + K
\times 3M = K \times (128N_R^3-N_R^2+\tfrac{3}{2}N_R)$ flops.
In OIA1 scheme, two $\otimes$ operations, two matrix scaling, a
single matrix addition, a single SVD, $2M$ real additions, and $M$
real multiplications are required at each user to find the feedback
information, so the total computational complexity becomes
$K\times 2 (8N_RM^2-2N_RM) + K\times 2N_R^2 + K\times 2N_R^2 +
K\times (24N_R^3+48N_R^3+54N_R^3)  + K \times 3M= K\times (130N_R^3
+3N_R^2 +\tfrac{3}{2}N_R)$ flops.
Note that the postprocessing matrix should be calculated to find
feedback information both in the MAX-SNR and the OIA1 schemes.
On the other hand, the OIA2 scheme requires two Gram-Schmidt
orthogonalization, two $\otimes$ operations, one matrix addition,
and a single $\Vert\cdot\Vert_F$ operation to construct feedback
information. The selected user needs $130N_R^3 + 3N_R^2$ additional
complexity to find the postprocessing matrix. Therefore, the total
complexity of the OIA2 scheme becomes $K \times 4 (8N_RM^2-2N_RM) +
K\times 2N_R^2 + K\times 4N_R^2  + (130N_R^3 + 3N_R^2) = K \times
(8N_R^3+2N_R^2) + (130N_R^3 + 3N_R^2)$.

The computational complexities of various schemes are summarized in
Table \ref{tab:comparison}.
When $N_R=4$, the required computational complexities according to
the number of users are plotted in Fig. \ref{fig:complexity}.
We can observe that the complexity of the OIA2 scheme is about
6.15\% of OIA1 scheme's complexity when the number of receive
antennas, $N_R$, and the number of users, $K$, are sufficiently
large.

\subsection{Performance Comparison}

In Fig. \ref{fig:user50_M1_alpha10_new}, the proposed OIA schemes
are compared with other user selection schemes in terms of
achievable rate per transmitter when $(N_T,M,N_R)=(1,1,2)$ and
$K=50$.
In this case, the optimal user selection scheme is to maximize the
capacity based on \eqref{eqn:optimal_capacity_ik}.
The proposed OIA schemes significantly outperform the MAX-SNR scheme
in the high SNR region.
It is shown that the proposed OIA2 scheme achieves a similar rate to
the OIA1 scheme but it requires much less computational complexity.
It is also shown that the proposed OIA schemes significantly
outperform the conventional MAX-SNR scheme in the high SNR region.
For a finite number of users, the achievable rates using the optimal
scheme, OIA1, and OIA2 schemes are saturated in the high SNR region.
On the other hand, the TDM1 and the TDM2 schemes achieve a DoF of
$1/3$ and $2/3$, respectively, and outperform the OIA schemes above
50dB and 30dB SNR, respectively.

To evaluate practical gains of the proposed OIA schemes, Fig.
\ref{fig:user50_M1_alpha10_new} compares the achievable rate of the
proposed OIA scheme with those of two well-known IA techniques --
Gomadam's MAX-SINR scheme \cite{GCJ2011} and the altering
minimization scheme \cite{PH2009}.
The antenna configuration $(N_T,M,N_R)=(2,1,2)$ is used for both
Gomadam's MAX-SINR scheme and the altering minimization scheme
because DoF of 1 cannot be achieved under the antenna configuration
$(N_T,M,N_R)=(1,1,2)$ used in our system model.
In Gomadam's MAX-SINR scheme \cite{GCJ2011}, the precoding and the
postprocessing matrices are iteratively optimized assuming the
reciprocity of the uplink and downlink channels. In the altering
minimization scheme \cite{PH2009}, perfect CSIT is assumed and
information sharing among the transmitters is allowed. However, it
should be noted that our OIA scheme do not requires perfect CSIT and
transmitter cooperation contrary to \cite{GCJ2011} and
\cite{PH2009}.

\JHnewpage\section{Conclusion}

We have interpreted the opportunistic interference management from a
perspective of IA and proposed a novel opportunistic interference
alignment (OIA) and analyzed its achievable DoF and its user scaling
law in a three-transmitter $M\times 2M$ MIMO IC channel.
The proposed OIA schemes have been shown to achieve $M$ DoF per
transmitter by opportunistically selecting the user whose received
interference signals are most aligned with each other.
Thus, in the high SNR region (i.e., from a DoF aspect), each
transmitter should select the user whose associated interfering
channels are aligned as much as possible.
Contrary to conventional IA which is known to achieve $2M/3$ DoF per
user in a three-transmitter $M \times 2M$ MIMO IC, the proposed OIA
schemes do not sacrifice the spatial dimensions in aligning
interference signals and secure the full spatial DoF by exploiting
the user dimensionality. Furthermore, the proposed OIA schemes do
not require global channel knowledge at the transmitters but need
only scalar value feedback from each user for user selection. We
have also proved that the full DoF of $M$ can be achieved when the
number of users grows with an appropriate scale. Finally, we have
compared our proposed scheme with the conventional schemes. Our
proposed OIA schemes have been shown to have advantages over
conventional user selection schemes for interference mitigation in
terms of both computational complexity and achievable rate.

\newpage
\linespread{1.8}

\clearpage

\begin{figure}[!b]
\centering
  \includegraphics[width=.5\columnwidth]{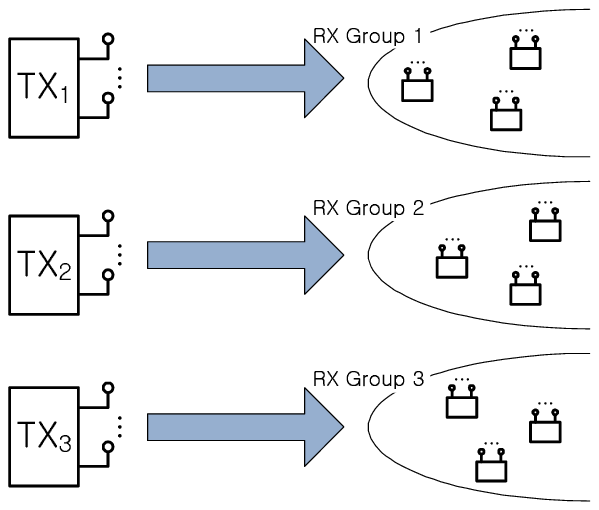}\\
  \caption{System model. Each transmitter selects one user
  from its group. }
  \label{fig:system_model}
\end{figure}

\begin{figure}[!t]
\centering
  \includegraphics[width=.60\columnwidth]{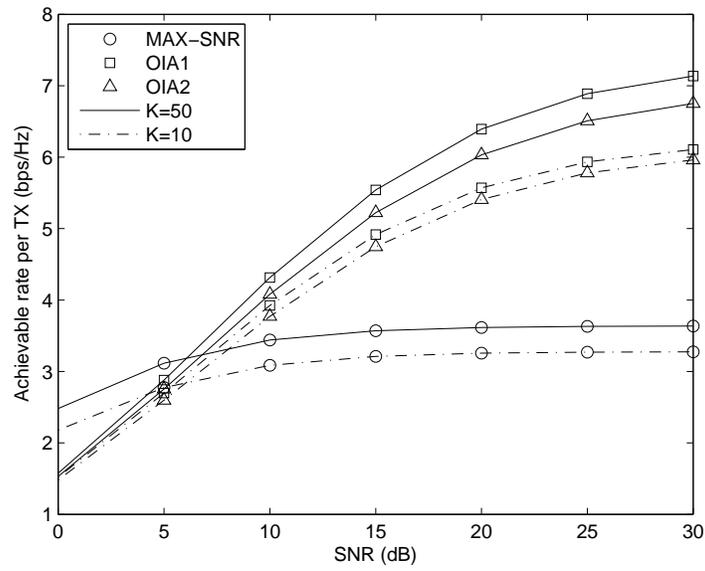}\\
  \caption{The achievable rate per transmitter using various schemes for $K=10, 50$ when $(N_T,M,N_R)=(2,2,4)$.}
  \label{fig:varying_K}
\end{figure}

\clearpage

\begin{figure}[!t]
\centering
  \includegraphics[width=.60\columnwidth]{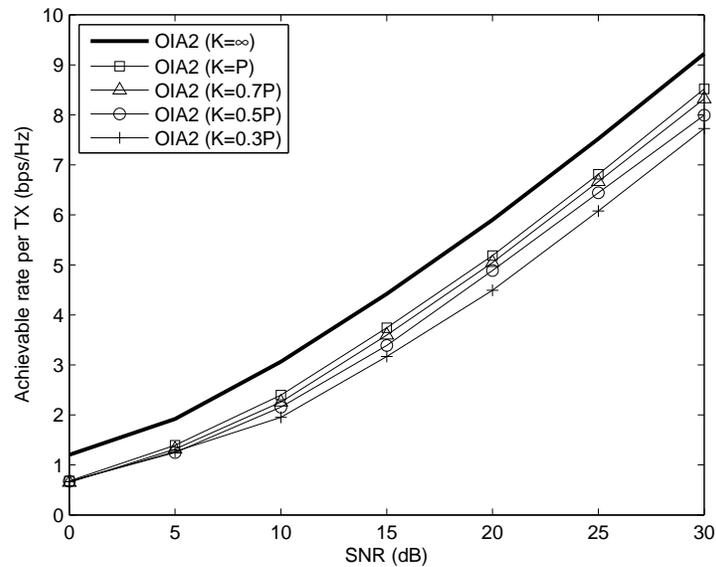}\\
  \caption{The achievable rate per transmitter of OIA2 scheme with
  scaling $K\propto P$ when $(N_T,M,N_R)=(1,1,2)$.}
  \label{fig:user_scaled_alpha10}
\end{figure}

\begin{table}\centering
\caption{The complexity of various operations for
$\mathbf{G}\in\mathbb{C}^{m\times n}$}
\begin{tabular}{c|c}
\hline
Operation & Complexity (flops) \\
\hline
\hline $\alpha\mathbf{G}$, $\mathbf{G}+\mathbf{G}$ & $2mn$\\
\hline $\Vert\mathbf{G}\Vert_F$ & $4mn$\\
\hline $\mathbf{G}\otimes\mathbf{G}=\mathbf{G}\mathbf{G}^\JHdagger$ & $8mn^2-2mn$\\
\hline Gram-Schmidt Ortho. & $8mn^2-2mn$\\
\hline Singular Value Decomp. & $24m^2n+48mn^2+54n^3$\\
\hline
\end{tabular}
\label{tab:complexity}
\end{table}

\clearpage

\begin{table}\centering
\caption{The complexity of various schemes}
\begin{tabular}{c|c|c}
    \hline
    Scheme &  Complexity & $\underset{(K, N_R\to \infty)}{\textrm{Ratio}}$\\
    \hline \hline
    MAX-SNR &  $K \times (128N_R^3-N_R^2+\tfrac{3}{2}N_R)$ & 98.4\%\\
    \hline
    OIA1 &  $K\times (130N_R^3 +3N_R^2 +\tfrac{3}{2}N_R)$ & 100 \%\\
    \hline
    OIA2   &  $K \times (8N_R^3+2N_R^2) + (130N_R^3 + 3N_R^2)$ & 6.15\%\\
    \hline
\end{tabular}
\label{tab:comparison}
\end{table}

\begin{figure}[!b]
\centering
  \includegraphics[width=.60\columnwidth]{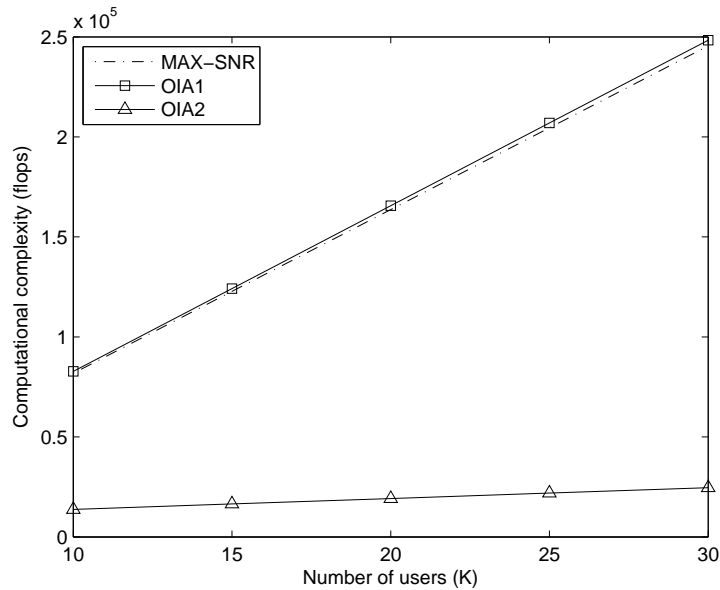}\\
  \caption{Complexities of various user selection schemes according to the number of users $K$ when $N_R=4$.}
  \label{fig:complexity}
\end{figure}

\begin{figure}[!t]
\centering
  \includegraphics[width=.60\columnwidth]{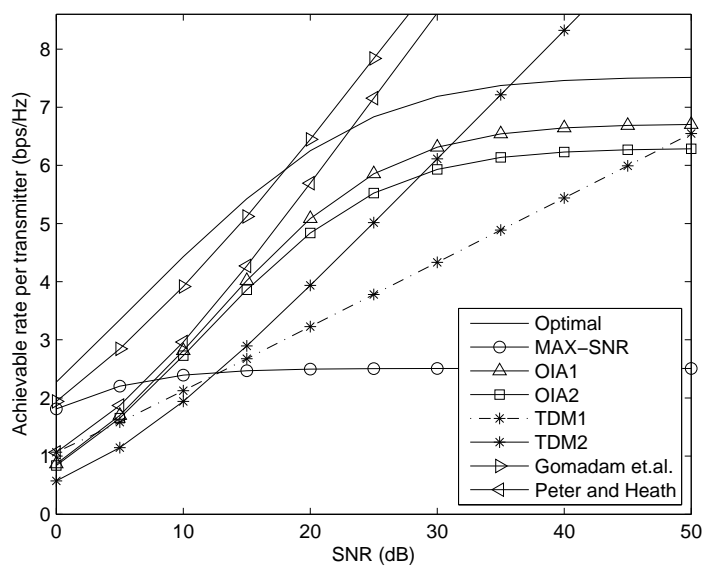}\\
  \caption{The achievable rate per transmitter of various user selection
  schemes when $(N_T,M,N_R)=(1,1,2)$ and $K=50$.}
  \label{fig:user50_M1_alpha10_new}
\end{figure}

\end{document}